\documentclass[sn-mathphys]{sn-jnl}

\jyear{2021}%

\theoremstyle{thmstyleone}%
\newtheorem{theorem}{Theorem}
\newtheorem{proposition}{Proposition}%
\usepackage[euler]{textgreek}
\usepackage{amscd,amssymb,amsmath,amsthm}

\usepackage{xcolor}

\usepackage{epstopdf}
\theoremstyle{thmstyletwo}%
\newtheorem{remark}{Remark}%
\newtheorem{lemma}{Lemma}
\theoremstyle{thmstylethree}%

\begin{document}

\title[Gibbs measure for the HC-Blume-Capel model ...]{Gibbs measure for the HC-Blume-Capel model in the case of a "wand" type graph on a Cayley tree}


\author*[1,2]{\fnm{Nosirjon M.} \sur{Khatamov}}\email{nxatamov@mail.ru}

\author[3]{\fnm{Malika A.} \sur{Kodirova}}\email{malika.kodirova24@gmail.com}
\equalcont{These authors contributed equally to this work.}

\affil*[1]{\orgdiv{Mathematics}, \orgname{Namangan State University}, \orgaddress{\street{Davlatabad district, Tadbirkor MFY, Yangi shahar street, 30}, \city{Namangan}, \country{Uzbekistan}}}

\affil[2]{\orgdiv{Mathematics}, \orgname{Institute of Mathematics named after V.I. Romanovsky of the Academy of Sciences of the Republic of Uzbekistan}, \orgaddress{\street{University street}, \postcode{100174}, \city{Tashkent}, \country{Uzbekistan}}}

\affil[3]{\orgdiv{Mathematics}, \orgname{Andijan State University}, \orgaddress{\street{Universitetskaya street}, \city{Andijan}, \postcode{170100}, \country{Uzbekistan}}}


\abstract{In this paper, we investigate translation-invariant splitting Gibbs measures (TISGMs) for the HC-Blume-Capel model on a "wand" graph embedded in the Cayley tree of arbitrary order $k \geq 2$. It is known that there is the exact critical value $\theta_{cr}$ such that for $\theta \geq \theta_{cr}$, there exists a unique TISGM, whereas for $\theta < \theta_{cr}$, precisely three TISGMs exist in the case of a "wand" graph for the given model. In the present paper, we completely solve the (non-)extremality problem for one of these measures for any order $k$ of Cayley tree.}

\keywords{Cayley tree, configuration, Blume-Capel model, Gibbs measure, translation-invariant measures.}


\pacs[Mathematics Subject Classification]{Primary 82B05 $\cdot$ 82B20; Secondary 60K35}

\maketitle
\section{Introduction}\label{intro}
A central goal in statistical physics is to determine all Gibbs measures for a given Hamiltonian, as these describe the probability distribution of microscopic states in a physical system. Each limiting Gibbs measure corresponds to a distinct phase, and a phase transition occurs when their number changes. It is well known that the set of limiting Gibbs measures forms a non-empty, convex, compact subset of probability measures, and each such measure can be decomposed into its extreme points, i.e., extreme Gibbs measures \cite{1,2,3,4}.

The hard-core (HC) model, introduced by Mazel and Suhov on the $d$-dimensional lattice $\mathbb{Z}^d$~\cite{5}, has been extensively studied, especially on Cayley trees \cite{6,7,8,9,10,11}. In particular, \cite{9} offers a full characterization of weakly periodic Gibbs measures for the HC model under a normal divisor of index $2$. In \cite{12}, fertile HC models were introduced based on interaction graphs such as the \emph{loop}, \emph{whistle}, \emph{wand}, and \emph{key}. Further studies \cite{13,14,15,16} analyzed fertile HC models with three states on Cayley trees of order $k > 1$. Specifically, \cite{14,15} classified all translation-invariant splitting Gibbs measures (TISGMs) for the \emph{loop} and \emph{wand} cases when $k=2$ and $k=3$, respectively, while \cite{16} showed that at least three TISGMs exist for any $k$ and addressed their extremality for $k=2$.

This work closely relates to prior studies on phase transitions and Gibbs measures on Cayley trees, particularly in systems with hard-core constraints and multiple spin states. Foundational techniques for analyzing TISGMs and their extremality, especially for constrained models, are developed in \cite{17,18,19,20}. Although the Blume-Capel model has been widely studied on both Cayley trees and lattice structures \cite{21,22}, our focus is a generalized version-the HC-Blume-Capel model with additional hard-core constraints, particularly for the \emph{wand}-type interaction graph on Cayley trees of arbitrary order \( k \geq 2 \). Prior studies mostly addressed either the unconstrained Blume-Capel model or cases with small tree orders (e.g., \( k = 2 \)).

The Blume-Capel model, a spin system where each spin takes values $-1, 0, 1$, was originally proposed to model $^3$He-$^4$He phase transitions \cite{23}. The state $\sigma(x) = 0$ denotes a vacancy at site $x$, while $\sigma(x) = \pm 1$ indicates occupied states with spin $\pm 1$ \cite{23,24,25}. Extensive results exist on this model \cite{26,27,28,29,30,31,32,33}. In \cite{27}, TISGMs for the HC-Blume-Capel model with a \emph{wand} graph were studied for $k=2$, identifying a critical value $\theta_{cr} = 1$ such that a unique TISGM exists for $\theta \geq \theta_{cr}$, and exactly three for $\theta < \theta_{cr}$. The role of chemical potential and extremality of these measures were further addressed in \cite{30}.

In \cite{33}, the results of \cite{27} are extended to arbitrary tree orders $k \geq 2$ for the \emph{wand}-type graph, where the exact critical value
\begin{equation} \nonumber
    \theta_{\text{cr}} = \sqrt[k+1]{\frac{k^k (k-1)}{2^k}}
\end{equation}
is found and also it is shown that for $\theta \geq \theta_{\text{cr}}$, there is a unique TISGM, while for $\theta < \theta_{\text{cr}}$, three TISGMs exist. The main result of the present paper is to study the (non)extremality problem for one of these measures for any $k$ using both the Kesten-Stigum and Martinelli-Sinclair-Weitz criteria.

 Our main results can be summarized as follows: for the case \(k = 3\), we show that the measure \(\mu_0\) is non-extremal for \(\theta \in (0, \theta_c^{(1)}) \cup (\theta_c^{(2)}, +\infty)\), where \(\theta_c^{(1)} \approx 0.83\) and \(\theta_c^{(2)} \approx 1.226\) (see Theorem~\ref{thm3}). On the other hand, for the intermediate range \(\theta_c^{(1)} < \theta < \theta_c^{(2)}\), the measure \(\mu_0\) is extremal (see Theorem~\ref{thm6}). For all larger values \(k \geq 4\), we prove that \(\mu_0\) is non-extremal for all \(\theta > 0\) (see Theorem~\ref{thm5}).

The paper is structured as follows. Section~\ref{Preliminaries} presents preliminary concepts and the system of functional equations, derives the critical value \( \theta_{\text{cr}} \) and the corresponding number of TISGMs. Sections~\ref{Conditions} and~\ref{kthree} address the (non)extremality of the measure \( \mu_0 \). Section~\ref{conclusion} summarizes key findings and outlines future research directions.

\section{Preliminaries}\label{Preliminaries}

A Cayley tree $\Gamma^{k} = (V, L)$ of order $k \geq 1$ is an infinite tree, i.e., a graph without cycles, where each vertex has exactly $k+1$ adjacent edges. Here, $V$ represents the set of vertices of $\Gamma^{k}$, and $L$ denotes its set of edges. The vertices $x$ and $y$ are called \emph{nearest neighbors} iff $\langle x,y \rangle\in L .$ The distance on this tree is defined as the number of nearest neighbour pairs of the minimal path between the vertices $x$ and $y$ (where path is a collection of nearest neighbour pairs, two consecutive pairs sharing at least a given vertex) and denoted by $d(x,y).$

Consider a model where the spin takes values from the set $\Phi = \{-1, 0, 1\}$. Then, a \emph{configuration} $\sigma$ on $V$ is defined as a function $x \in V \mapsto \sigma(x) \in \Phi$. The set of all possible configurations is given by $\Omega = \Phi^V$. For a subset $A \subset V$, let $\Omega_A$ denote the space of configurations defined on $A$.

The Hamiltonian of the Blume-Capel model is given by:
\begin{equation} \label{f1}
    H(\sigma) = J \sum_{\langle x, y \rangle, x, y \in V} (\sigma(x) - \sigma(y))^2,
\end{equation}
where $J \in \mathbb{R}$.

Let $x^0 \in V$ be a fixed vertex. We write $x \prec y$ if the shortest path from $x^0$ to $y$ passes through $x$. Define the sets:
\begin{equation} \nonumber
    W_n = \{x \in V : d(x^0, x) = n\}, \quad V_n = \{x \in V : d(x^0, x) \leq n\}.
\end{equation}
A vertex $y$ is called a \emph{direct successor} of $x$ if $x \prec y$ and $d(x, y) = 1$. Let $S(x)$ denote the set of direct successor of $x \in V$.

Let $G$ be a graph with vertices $\{-1,0,1\}$ and the set of edges is equal to $\{0,-1\},$ $\{0,1\},$ $\{-1,-1\},$ $\{1,1\}.$ We call $G$ a graph of type "wand."

A configuration $\sigma$ is called a $G$-\emph{admissible configuration} on the Cayley tree if for any nearest-neighbor pair $x, y$ from $V$, the pair $\{\sigma(x), \sigma(y)\}$ is an edge in $G$. The set of $G$-admissible configurations is denoted by $\Omega^G.$ The restriction of this set on subsets $A\subset V$ is denoted by $\Omega^G_A.$

We now give a construction of the splitting Gibbs measures. Let $h: x \mapsto h_x = (h_{-1,x}, h_{0,x}, h_{1,x})$ be a vector function for $x \in V \setminus \{x^0\}$. Consider the probability distribution $\mu^{(n)}$ on $\Omega_{V_n}^G$ given by:
\begin{equation} \label{f2}
    \mu^{(n)}(\sigma_n) = Z_n^{-1} \exp \left\{-\beta H_n(\sigma_n) + \sum_{x \in W_n} h_{\sigma_n(x), x} \right\},
\end{equation}
where $\sigma_n \in \Omega_{V_n}^G$, $H_n$ is the Hamiltonian $H$ restricted to $V_n$ and the partition function is given by:
\begin{equation}\nonumber
    Z_n = \sum_{\overline{\sigma}_n \in \Omega_{V_n}^G} \exp \left\{-\beta H_n(\overline{\sigma}_n) + \sum_{x \in W_n} h_{\overline{\sigma}_n(x), x} \right\}.
\end{equation}

A probability distribution $\mu^{(n)}$ $(\forall n \geq 1)$ is said to be consistent if
\begin{equation} \label{f3}
    \sum_{\sigma^{(n)}} \mu^{(n)}(\sigma_{n-1}, \sigma^{(n)}) = \mu^{(n-1)}(\sigma_{n-1})
\end{equation}
for all $n \geq 1$ and $\sigma_{n-1} \in \Omega_{V_{n-1}}^G$.

In this case, due to Kolmogorov's extension theorem, there exists a unique measure $\mu$ on $\Omega_V^G$ such that:
\begin{equation} \nonumber
    \mu(\{\sigma \mid_{V_n} = \sigma_n\}) = \mu^{(n)}(\sigma_n),
\end{equation}
for all $n \geq 1$ and $\sigma_n \in \Omega_{V_n}^G$.

Let $L(G)$ denote the set of edges of the graph $G$, and let $A \equiv A^G = (a_{ij})_{i,j=-1,0,1}$ represent the adjacency matrix of $G$, i.e.,
\begin{equation*}
    a_{ij} \equiv a_{ij}^G =
    \begin{cases}
        1, & \text{if } \{i,j\} \in L(G), \\
        0, & \text{if } \{i,j\} \notin L(G).
    \end{cases}
\end{equation*}

The following theorem provides necessary and sufficient conditions on $h_{i,x}$ under which Equation \eqref{f3} holds.

\begin{theorem}\label{thm1} \cite{27} Let $k \geq 2$. The probability distribution $\mu^{(n)}, n=1,2,\dots$ in \eqref{f2} is consistent if and only if, for any $x \in V$, the following conditions hold:
\begin{equation} \label{f4}
    \begin{cases}
        z_{1,x} = \prod\limits_{y \in S(x)} \frac{a_{1,-1} \theta^4 z_{-1,y} + a_{1,0} \theta + a_{1,1} z_{1,y}}{a_{0,-1} \theta z_{-1,y} + a_{0,0} + a_{0,1} \theta z_{1,y}}, \\
        z_{-1,x} = \prod\limits_{y \in S(x)} \frac{a_{-1,-1} z_{-1,y} + a_{-1,0} \theta + a_{-1,1} \theta^4 z_{1,y}}{a_{0,-1} \theta z_{-1,y} + a_{0,0} + a_{0,1} \theta z_{1,y}},
    \end{cases}
\end{equation}
where $\theta = \exp(-J \beta), \beta = 1/T$, and $z_{i,x} = \exp(h_{i,x} - h_{0,x})$ for $i=-1,1$. \end{theorem}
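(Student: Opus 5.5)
The plan is the standard Kolmogorov-consistency argument for splitting Gibbs measures on trees. First write out \eqref{f3} explicitly. Fix $\sigma_{n-1}\in\Omega^G_{V_{n-1}}$. Every vertex $y\in W_n$ has a unique predecessor $x\in W_{n-1}$, so the admissible extensions factorize over $y$. Under the convention $\langle x,y\rangle$ counted once, $H_n(\sigma_{n-1},\sigma^{(n)})=H_{n-1}(\sigma_{n-1})+J\sum_{x\in W_{n-1}}\sum_{y\in S(x)}(\sigma(x)-\sigma(y))^2$. Summing over $\sigma^{(n)}$, the left side of \eqref{f3} becomes
\[
Z_n^{-1}e^{-\beta H_{n-1}(\sigma_{n-1})}\prod_{x\in W_{n-1}}\prod_{y\in S(x)}\sum_{j\in\{-1,0,1\}} a_{\sigma(x)j}\,\theta^{(\sigma(x)-j)^2}e^{h_{j,y}} .
\]
The Boltzmann factor is $e^{-\beta J(i-j)^2}=\theta^{(i-j)^2}$, which equals $1,\theta,\theta^4$ for $|i-j|=0,1,2$. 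The right side of \eqref{f3} is $Z_{n-1}^{-1}e^{-\beta H_{n-1}}\prod_{x\in W_{n-1}}e^{h_{\sigma(x),x}}$.

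Hence consistency is equivalent to the following. For every $x\in W_{n-1}$ and every $i\in\{-1,0,1\}$,
\[
\prod_{y\in S(x)}\sum_j a_{ij}\theta^{(i-j)^2}e^{h_{j,y}}=\lambda_x e^{h_{i,x}},
\]
where the constant $\lambda_x>0$ does not depend on $i$. For sufficiency, multiplying these identities over $x$ shows that the two sides of \eqref{f3} agree up to the constant $\prod_x\lambda_x\, Z_{n-1}/Z_n$. Both sides are probability measures on $\Omega^G_{V_{n-1}}$, so this constant equals $1$.

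For necessity, I would compare \eqref{f3} at two configurations $\sigma_{n-1}$ and $\sigma'_{n-1}$ that differ only at a single vertex $x\in W_{n-1}$, with $\sigma(x)=i$ and $\sigma'(x)=0$. The ratio of the two equations isolates the factor at $x$, giving
\[
\prod_{y\in S(x)}\frac{\sum_j a_{ij}\theta^{(i-j)^2}e^{h_{j,y}}}{\sum_j a_{0j}\theta^{(j)^2}e^{h_{j,y}}}=e^{h_{i,x}-h_{0,x}} .
\]

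This comparison step is the main obstacle, because admissibility constrains which single-site changes are possible. Changing $\sigma(x)$ must keep the edge to the predecessor of $x$ in $G$. The predecessor's value can be chosen freely in $\sigma_{n-1}$, but the value $0$ at $x$ is only compatible with predecessor values $\pm1$, while $x=\pm1$ is compatible with $0$ or the same sign. So for $i=1$ I would choose the predecessor value $1$, which is adjacent to both $1$ and $0$. For $i=-1$ I would choose the predecessor value $-1$. Both such configurations are admissible on $V_{n-1}$, using alternating patterns along the path back to $x^0$, e.g. $\dots,0,1,1$. If $n-1=0$, so that $x=x^0$ has no predecessor, the issue does not arise. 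If this comparison proves too delicate, the alternative is to read off $\lambda_x$ directly from the factorized identity for the $i=0$ row, using the marginal at $x$ alone.

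It remains to divide and substitute $z_{i,y}=e^{h_{i,y}-h_{0,y}}$ and the adjacency entries. The numerator for $i=1$ is $a_{1,-1}\theta^4 z_{-1,y}+a_{1,0}\theta+a_{1,1}z_{1,y}$. The denominator is $a_{0,-1}\theta z_{-1,y}+a_{0,0}+a_{0,1}\theta z_{1,y}$. This gives the first line of \eqref{f4}, and the case $i=-1$ gives the second line symmetrically.
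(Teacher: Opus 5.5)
Your proposal is correct and follows the standard argument for this result. The paper gives no proof and only cites \cite{27}, where the usual route is the same as yours: substitute \eqref{f2} into \eqref{f3}, factor the sum over $\sigma^{(n)}$ over the successors, divide the identity for $\sigma(x)=\pm1$ by the one for $\sigma(x)=0$, and obtain sufficiency by normalization. Your explicit check that the single-site change $i\leftrightarrow 0$ stays admissible, by giving the predecessor value $i$ (e.g. the constant configuration $i$ on $V_{n-1}$), is the one place where the wand's hard-core constraint needs care, and you handle it correctly.
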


Translation-invariant splitting Gibbs measures (TISGMs) correspond to solutions of Equation (4) with $z_{i,x} = z_i$ for all $x \in V$ and $i=-1,1$. For convenience, we denote $z_{+1} = z_1$ and $z_{-1} = z_2$. Then, in the case of a \emph{wand-type} graph, the system of equations \eqref{f4} simplifies to:
\begin{equation} \label{f5}
    \begin{cases}
        z_1 = \left(\frac{\theta + z_1}{\theta z_1 + \theta z_2}\right)^k, \\
        z_2 = \left(\frac{\theta + z_2}{\theta z_1 + \theta z_2}\right)^k.
    \end{cases}
    \tag{5}
\end{equation}

First, consider the case $z_1 = z_2 = z$. Then, from Equation \eqref{f5}, we obtain
\begin{equation} \label{f6}
    z = \left(\frac{\theta + z}{2 \theta z}\right)^k. \tag{6}
\end{equation}

The following lemma holds.

\begin{lemma}\label{lem1} \textit{Let $k \geq 2$. The following assertions hold
\begin{enumerate}
\item For any $\theta > 0$, Eq. \eqref{f6} has a unique positive solution.
\item Let $z^*$ be the unique positive solution of equation \eqref{f6}. Then
\begin{enumerate}
    \item If $\theta>1$, then $z^*<\theta$.
    \item If $\theta<1$, then $z^*>\theta$.
\end{enumerate}
\end{enumerate}}
\end{lemma}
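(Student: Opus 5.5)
Write Eq. \eqref{f6} as $z = f(z)$ with
\[
f(z)=\left(\frac{\theta+z}{2\theta z}\right)^k=\left(\frac{1}{2z}+\frac{1}{2\theta}\right)^k, \qquad z>0.
\]
For part 1 I will use monotonicity. The function $f$ is strictly decreasing on $(0,\infty)$, with $f(z)\to+\infty$ as $z\to0^+$ and $f(z)\to(2\theta)^{-k}>0$ as $z\to\infty$. Hence $g(z)=z-f(z)$ is continuous and strictly increasing. It tends to $-\infty$ at $0^+$ and to $+\infty$ at $\infty$. By the intermediate value theorem $g$ has exactly one zero $z^*>0$, which proves existence and uniqueness.

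For part 2 I will compare $z^*$ with $\theta$ by looking at the sign of $g(\theta)$. Since $g$ is strictly increasing and $g(z^*)=0$, we have $z^*<\theta$ if and only if $g(\theta)>0$, and $z^*>\theta$ if and only if $g(\theta)<0$. Direct substitution gives
\[
f(\theta)=\left(\frac{2\theta}{2\theta^2}\right)^k=\theta^{-k},\qquad g(\theta)=\theta-\theta^{-k}=\theta^{-k}\bigl(\theta^{k+1}-1\bigr).
\]
If $\theta>1$, then $\theta^{k+1}>1$, so $g(\theta)>0$ and therefore $z^*<\theta$. If $\theta<1$, then $g(\theta)<0$ and therefore $z^*>\theta$. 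As a by-product, $\theta=1$ gives $z^*=1$.

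No step here is a genuine obstacle. The only point needing care is the strict monotonicity of $g$, which follows at once because $f$ is a $k$-th power of a positive, strictly decreasing function.
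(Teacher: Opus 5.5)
Your proof is correct and follows the same route as the paper: strict monotonicity of $f$ gives uniqueness, and the evaluation $f(\theta)=\theta^{-k}$ settles the comparison of $z^*$ with $\theta$. Your version is slightly more complete. You justify existence via the limits of $g$ at $0^+$ and $\infty$, which the paper omits. Your sign argument on $g(\theta)$ also covers the boundary case $z^*=\theta$, which the paper's contradiction argument (assuming only $z^*>\theta$) leaves unaddressed.
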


\begin{proof} (1) Rewrite Eq. \eqref{f6} as
\begin{equation} \nonumber
    z = f(z),
\end{equation}
where
\begin{equation*}
    f(z) = \left(\frac{\theta + z}{2 \theta z}\right)^k.
\end{equation*}
Differentiating $f(z)$, we obtain
\begin{equation*}
    f'(z) = k \left(\frac{\theta + z}{2 \theta z}\right)^{k-1} \left(-\frac{1}{2z^2}\right) < 0,
\end{equation*}
indicating that $f$ is a decreasing function for $z > 0$. Hence, equation $z = f(z)$ has a unique solution $z^* = z^*(k, \theta)$ for any $\theta > 0$.

(2.a). Let $\theta>1$ and $z^*=f(z^*),$  then we need to prove $z^*<\theta$. Assume the opposite, i.e., $z^*>\theta$. Since $f(z)$ is decreasing, then $z^*=f(z^*)<f(\theta)=\frac{1}{\theta^{k}}<1$. This is a contradiction, since $z^*>\theta>1$. Thus, the first assertion of the lemma is proved. The remaining part is proved similarly. The lemma is proved.\end{proof}

The following theorem holds.

\begin{theorem}\label{thm2}\cite{33} Let $k \geq 2$ and define
\begin{equation*}
    \theta_{cr} = \theta_{cr}(k) = \sqrt[k+1]{\frac{k^k (k-1)}{2^k}}.
\end{equation*}
Then, for the HC-Blume-Capel model in the case of a wand-type graph:
\begin{enumerate}
    \item If $\theta \geq \theta_{cr}$, there exists exactly one TISGM, denoted $\mu_0$.
    \item If $\theta < \theta_{cr}$, there exist exactly three TISGMs, denoted $\mu_0, \mu_1, \mu_2$.
\end{enumerate} \end{theorem}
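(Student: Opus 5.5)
The plan is to count the positive solutions of system \eqref{f5}. I split them into symmetric solutions ($z_1=z_2$) and asymmetric ones ($z_1\neq z_2$), then pair off the asymmetric ones by the symmetry $z_1\leftrightarrow z_2$. By Lemma~\ref{lem1}, there is exactly one symmetric solution $z^*$ for every $\theta>0$. This gives the measure $\mu_0$. It remains to show two things: no asymmetric solution exists when $\theta\ge\theta_{cr}$, and exactly one pair $(z_1,z_2),(z_2,z_1)$ exists when $\theta<\theta_{cr}$. That pair gives $\mu_1,\mu_2$.

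For the asymmetric case, set $u_i=\sqrt[k]{z_i}>0$. Then \eqref{f5} becomes
\[
u_i\,\theta\,(u_1^k+u_2^k)=\theta+u_i^k,\qquad i=1,2 .
\]
Subtracting the two equations gives $\theta(u_1-u_2)(u_1^k+u_2^k)=u_1^k-u_2^k$. For $u_1\ne u_2$ this reduces to
\[
\theta(u_1^k+u_2^k)=\sum_{j=0}^{k-1}u_1^{k-1-j}u_2^{j}.
\]
Substituting this back into the first equation gives $u_1\sum_j u_1^{k-1-j}u_2^j=\theta+u_1^k$, that is, $\sum_{j=1}^{k-1}u_1^{k-j}u_2^{j}=\theta$. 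By symmetry of this expression the second equation gives the same condition, so the system is equivalent to the following two equations:
\[
u_1u_2\sum_{j=0}^{k-2}u_1^{k-2-j}u_2^{j}=\theta,\qquad \theta(u_1^k+u_2^k)=\sum_{j=0}^{k-1}u_1^{k-1-j}u_2^{j}.
\]
Both are homogeneous apart from the constant $\theta$. I would therefore write $u_2=t u_1$ with $t>0$, $t\ne1$. The second equation gives $u_1=\frac{1+t+\dots+t^{k-1}}{\theta(1+t^k)}$, and the first then becomes a single equation $\theta=\Psi(t)$, namely
\[
\theta^{k+1}=\frac{t(1+t+\dots+t^{k-2})(1+t+\dots+t^{k-1})^{k}}{(1+t^k)^{k}} .
\]
The right-hand side is invariant under $t\mapsto 1/t$; this is the swap $z_1\leftrightarrow z_2$. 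So it suffices to study $t\in(0,1)$, and each root there gives one pair of solutions.

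The main step is to show that $\Phi(t)$, the right-hand side above, is strictly increasing on $(0,1)$. It tends to $0$ as $t\to0$, and its value at $t=1$ is $\frac{(k-1)k^k}{2^k}=\theta_{cr}^{k+1}$. With these facts the equation has exactly one root in $(0,1)$ for $\theta<\theta_{cr}$ and none for $\theta\ge\theta_{cr}$; the borderline $t=1$ is excluded because $t\ne1$. For the monotonicity I would compute the logarithmic derivative
\[
\frac{d}{dt}\log\Phi=\frac1t+\frac{S_{k-2}'}{S_{k-2}}+k\frac{S_{k-1}'}{S_{k-1}}-\frac{k^2t^{k-1}}{1+t^k},
\]
where $S_m=1+\dots+t^m$. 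I would then show this is positive on $(0,1)$. A convenient route is to use $S_{k-1}=\frac{1-t^k}{1-t}$. Then $k\frac{d}{dt}\log S_{k-1}-\frac{k^2t^{k-1}}{1+t^k}$ simplifies to
\[
\frac{k}{1-t}-\frac{2k^2t^{k-1}}{1-t^{2k}} .
\]
This is nonnegative by AM--GM, since $\frac{1-t^{2k}}{1-t}=\sum_{j<2k}t^j\ge 2k\,t^{k-1/2}\ge 2k\,t^{k-1}$ for $t<1$. The remaining terms $1/t$ and $S_{k-2}'/S_{k-2}$ are positive, which gives strict monotonicity.

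I expect this monotonicity step to be the real obstacle: if the AM--GM bound turned out too weak, I would fall back to clearing denominators and checking the sign of the resulting polynomial coefficient by coefficient. Finally, positivity of $u_1,u_2$ is automatic from the formulas. Distinct positive solutions of \eqref{f5} give distinct TISGMs by Theorem~\ref{thm1}, so the count of measures equals the count of solutions, which is one for $\theta\ge\theta_{cr}$ and three for $\theta<\theta_{cr}$.
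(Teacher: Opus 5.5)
\paragraph{Gap: the monotonicity of $\Phi$.} Your reduction is correct: the substitution $u_i=z_i^{1/k}$, the equivalent pair of equations, the formula for $u_1$, the equation $\theta^{k+1}=\Phi(t)$, its invariance under $t\mapsto 1/t$, and the values $\Phi(0^+)=0$, $\Phi(1)=\theta_{cr}^{k+1}$ all check out. The paper gives no proof of its own and only defers to [33], so there is nothing to compare against. Your whole count rests on strict monotonicity of $\Phi$ on $(0,1)$, and your argument for that step is wrong. In the chain $\sum_{j<2k}t^j\ge 2k\,t^{k-1/2}\ge 2k\,t^{k-1}$, the second inequality is reversed for $0<t<1$. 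It also cannot be repaired, because the quantity you want to be nonnegative is not: putting $t=1-\varepsilon$ gives
\[
\frac{k}{1-t}-\frac{2k^2t^{k-1}}{1-t^{2k}}\longrightarrow-\frac{k}{2}\qquad (t\to 1^-).
\]
This failure is structural. The symmetry $\Phi(t)=\Phi(1/t)$ forces $(\log\Phi)'(1)=0$. At $t=1$ the two terms you set aside as harmless positive extras, $\frac1t+\frac{S_{k-2}'}{S_{k-2}}$, equal $1+\frac{k-2}{2}=\frac k2$, and they cancel this limit exactly. So any argument that discards them must fail near $t=1$; all four terms have to be treated together.

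\paragraph{How to fix it.} Your fallback of clearing denominators does work, but only after factoring out $1-t$. Let $N(t)=tS_{k-2}S_{k-1}(1+t^k)\,(\log\Phi)'(t)$. Using $S_{k-2}+tS_{k-2}'=(tS_{k-2})'$, one finds
\[
N(t)=\sum_{i=1}^{k-1}\sum_{j=0}^{k-1}\Bigl[(i+kj)\,t^{i+j-1}+(i+kj-k^2)\,t^{i+j+k-1}\Bigr].
\]
This is a polynomial of degree $3k-3$ with $t^{3k-3}N(1/t)=-N(t)$. Hence it has negative coefficients, and a term-by-term sign check of $N$ alone cannot succeed. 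Look at its coefficients instead:
\begin{itemize}
\item the coefficient of $t^m$ is positive for $m\le k-1$;
\item the coefficient of $t^{k+r}$ equals $k(k+1)\bigl(\frac{k-3}{2}-r\bigr)$ for $0\le r\le k-2$.
\end{itemize}
So every coefficient up to the middle degree $\frac{3k-3}{2}$ is nonnegative. Write $N=(1-t)Q$. The coefficients of $Q$ are the partial sums of those of $N$, starting from $N_0=1$, and $Q$ is palindromic. Therefore all coefficients of $Q$ are positive; for $k=3$, $Q=1+7t+19t^2+19t^3+7t^4+t^5$. This gives $(\log\Phi)'>0$ on $(0,1)$. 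With this replacement, the rest of your argument goes through.
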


\begin{proof} The proof is given in \cite{33}, however, with some typos. \end{proof}

\begin{remark}\label{rk0} In general, one would expect an $\theta-$dependent uniqueness regime for the set of Gibbs measures corresponding to this model, for example around $\theta=\exp\{-\beta J\}=1$ where the quadratic interaction vanishes, however, due to the hard-core constraints there are multiple Gibbs measures.\end{remark}

\section{Conditions for Non-Extremality of the Measure $\mu_{0}$}\label{Conditions}

To analyze the (non-)extremality of the measure, we employ the method presented in \cite{34,35,36}. Specifically, we consider Markov chains with state space $\{-1,0,1\}$ and define the probability transition matrix $P_{\mu}$ associated with the given TISGM $\mu$. The transition probability $P_{\sigma(x)\sigma(y)}$ represents the probability of transitioning from state $\sigma(x)$ to state $\sigma(y)$.

A sufficient condition for the corresponding Gibbs measure to be non-extremal is given by the Kesten-Stigum criterion:
\begin{equation}\nonumber
    k\lambda_{2}^{2} > 1,
\end{equation}
where $\lambda_{2}$ is the second largest eigenvalue of the matrix $P_{\mu}$.

For $k \geq 2$ and $\theta \geq \theta_{cr},$ it is evident that the system of equations \eqref{f5} has a unique solution $(z^*, z^*),$ where $z^*$ is the unique solution of the equation
\begin{equation}\label{f20}
    z = \left(\frac{\theta+z}{2\theta z}\right)^k. \tag{20}
\end{equation}
For $0 < \theta < \theta_{cr}$, the system admits three solutions: $(z^*,z^*)$, $(z_1,z_2)$, and $(z_2,z_1)$.

Let us determine the conditions for the non-extremality of the measures corresponding to these solutions. To this end, we consider the transition probabilities given by
\begin{equation}\nonumber
P_{\sigma (x) \sigma (y)}=\frac{a_{\sigma(x), \sigma(y)}\exp\{-J\beta (\sigma (x)- \sigma (y))^2+h_{\sigma (y)}\}}{\sum_{\widetilde{\sigma}(y)\in\{-1, 0, 1\}}a_{\sigma(x), \widetilde{\sigma}(y)}\exp\{-J\beta (\sigma(x)-\widetilde{\sigma}(y))^2 +h_{\widetilde{\sigma}(y)}\}}.
\end{equation}

In the case under consideration, where \( G \) is a \emph{wand}, the coefficients \( a_{\sigma(x),\sigma(y)} \) are given by
\begin{align}\nonumber
    a_{-1, -1} &= 1, \quad a_{-1, 0} = 1, \quad a_{-1, 1} = 0, \notag \\
    a_{0, -1} &= 1, \quad a_{0, 0} = 0, \quad a_{0, 1} = 1, \notag \\
    a_{1, -1} &= 0, \quad a_{1, 0} = 1, \quad a_{1, 1} = 1. \notag
\end{align}

From these values, we obtain the transition probabilities
\begin{align} \nonumber
    P_{-1, -1} &= \frac{z_2}{z_2+\theta}, & P_{-1, 0} &= \frac{\theta}{z_2+\theta}, & P_{-1, 1} &= 0, \notag \\
    P_{0, -1} &= \frac{z_2}{z_1+z_2}, & P_{0, 0} &= 0, & P_{0, 1} &= \frac{z_1}{z_1+z_2}, \notag \\
    P_{1, -1} &= 0, & P_{1, 0} &= \frac{\theta}{z_1+\theta}, & P_{1, 1} &= \frac{z_1}{z_1+\theta}. \notag
\end{align}

Consequently, the transition matrix takes the form
\begin{equation}
    P=
    \begin{bmatrix}
        \frac{z_2}{z_2+\theta} & \frac{\theta}{z_2+\theta} & 0 \\
        \frac{z_2}{z_1+z_2} & 0 & \frac{z_1}{z_1+z_2} \\
        0 & \frac{\theta}{z_1+\theta} & \frac{z_1}{z_1+\theta}
    \end{bmatrix}.\tag{21}
\end{equation}

For a single solution of the system of equations \eqref{f5}, the matrix \( P \) simplifies to
\begin{equation}\label{f22}
   P=
    \begin{bmatrix}
        \frac{z}{z+\theta} & \frac{\theta}{z+\theta} & 0 \\
        \frac{1}{2} & 0 & \frac{1}{2} \\
        0 & \frac{\theta}{z+\theta} & \frac{z}{z+\theta}
    \end{bmatrix}, \tag{22}
\end{equation}
where \( z \) is the unique positive solution of equation \eqref{f20}.

It is evident that one of the eigenvalues of this matrix is \( s_3=1 \). The remaining two eigenvalues are
\begin{equation} \nonumber
    s_1 = \frac{z}{z+\theta}, \quad s_2 = -\frac{\theta}{z+\theta}.
\end{equation}

To determine the dominant eigenvalue, we compute $\max\{\lvert s_1\rvert, \lvert s_2\rvert\}$:
\begin{equation}\nonumber
     \left\lvert s_1\right\rvert - \left\lvert s_2\right\rvert = \frac{z}{z+\theta} - \frac{\theta}{z+\theta} = \frac{z-\theta}{z+\theta}.
\end{equation}

Let $0<\theta<1$. From Lemma \ref{lem1}, it follows that
\begin{equation}\nonumber
    \max\{\mid s_1\mid, \mid s_2\mid\} = \mid s_1\mid.
\end{equation}

For $\theta>1$, Lemma \ref{lem1} implies that
\begin{equation}\nonumber
    \max\{\mid s_1\mid, \mid s_2\mid\} = \mid s_2\mid,
\end{equation}
i.e.,
\begin{equation}\nonumber
    \max\{\mid s_1\mid, \mid s_2\mid\} = \begin{cases}
        \mid s_1\mid, & \text{if } 0<\theta<1, \\
        \mid s_2\mid, & \text{if } \theta>1.
    \end{cases}
\end{equation}
Consequently, for $\theta > 1$, the inequality holds:
\begin{equation}\nonumber
    s_1 < \mid s_2\mid < s_3 = 1.
\end{equation}

Now, let $\theta > 1$ and set $k=3$. In this case, we examine the Kesten-Stigum condition for the non-extremality of the measure $\mu_0$, which requires $3s_2^2 > 1$. By employing the Ferrari method, we solve equation \eqref{f20}, which admits a real solution:
\begin{equation}\label{f23}
    z = \frac{\sqrt{A(\theta)} + \frac{1}{16\theta^3} + \sqrt{\left(\sqrt{A(\theta)} + \frac{1}{16\theta^3}\right)^2 - 4\left(\frac{y(\theta)}{2} - \sqrt{C(\theta)}\right)}}{2}, \tag {23}
\end{equation}
where
\begin{equation}\nonumber
    A(\theta) = \frac{1}{256\theta^6} + \frac{3}{8\theta^2} + y(\theta), \quad C(\theta) = \frac{y(\theta)^2}{4} + \frac{1}{8},
\end{equation}

\begin{equation} \nonumber
    y(\theta) = \frac{\frac{1}{24} \sqrt[3]{108\theta^4 + 12\sqrt{6144\theta^{12} + 81\theta^8}} - \frac{4\theta^4}{\sqrt[3]{108\theta^4 + 12\sqrt{6144\theta^{12} + 81\theta^8}}} - \frac{1}{8}}{\theta^2}.
\end{equation}

To determine the interval of non-extremality for this measure, we verify the condition:
\begin{equation}\nonumber
    3s_2^2 - 1 = 3 \left( \frac{\theta}{z+\theta} \right)^2 - 1 > 0,
\end{equation}
where $z$ is given by \eqref{f23}. Using the Maple software, we confirm that this inequality holds for $\theta \in (\theta_c^{(2)}, +\infty)$, indicating that under this condition, the measure $\mu_0$ is non-extremal. Here, $\theta_c^{(2)} \approx 1.226$ (see Fig. 2).

\begin{center}
\includegraphics[width=6cm]{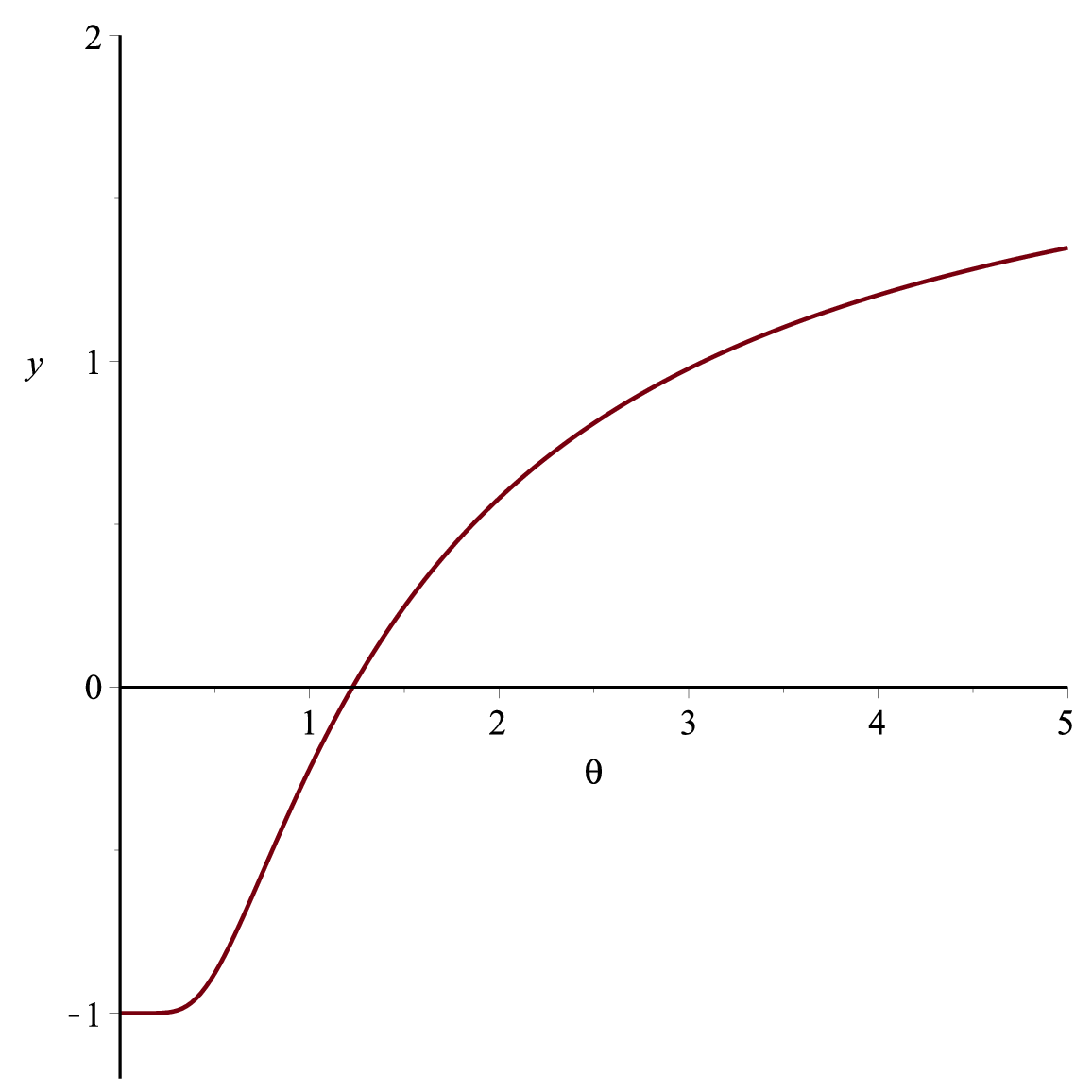}\includegraphics[width=6cm]{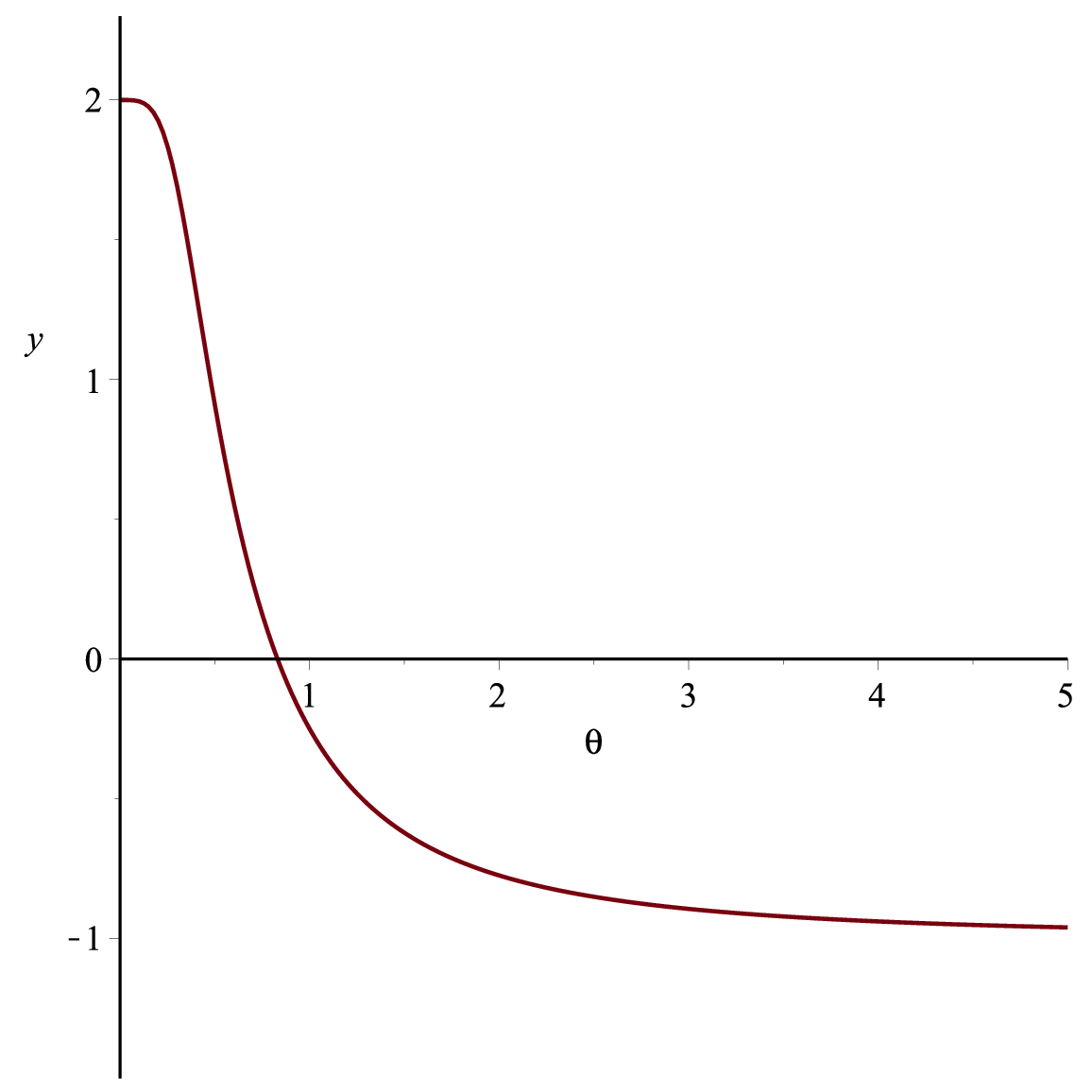}\label{fig2}
\end{center}
\begin{center}{\footnotesize \noindent
 Figure 2. The graph of functions $3s_2^2-1$ (left) and $3s_1^2-1$ (right).}\
\end{center}

Now, let \( 0 < \theta < 1 \). In this case, we have \( \mid s_2\mid < s_1 < s_3 = 1 \). The Kesten-Stigum condition for the non-extremality of the measure \( \mu_0 \) is given by

\[
3s_1^2 > 1.
\]

To determine the interval of non-extremality of \( \mu_0 \) for \( 0 < \theta < 1 \), we examine the inequality

\[
3s_1^2 - 1 = 3\left(\frac{z}{z+\theta}\right)^2 - 1 > 0,
\]
where \( z \) is given by equation \eqref{f23}. Using the Maple program, we verify that this inequality holds for \( \theta \in (0, \theta_{c}^{(1)}) \). Thus, under this condition, the measure \( \mu_0 \) is non-extremal, where \( \theta_{c}^{(1)} \approx 0.83 \) (see Fig. 2).

\begin{theorem}\label{thm3} Let $k=3$. Then, for the HC-Blume-Capel model on a \emph{wand} graph, the measure $\mu_0$ is not extremal for $\theta \in (0, \theta_{c}^{(1)}) \cup (\theta_{c}^{(2)}, +\infty)$, where $\theta_{c}^{(1)} \approx 0.83$ and $\theta_{c}^{(2)} \approx 1.226$. \end{theorem}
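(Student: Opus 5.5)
The plan is to apply the Kesten--Stigum criterion $k\lambda_2^2>1$ to the matrix \eqref{f22}, with $k=3$. The spectrum of \eqref{f22} is $\{1,s_1,s_2\}$. The eigenvalue $s_3=1$ has eigenvector $(1,1,1)^T$. The eigenvalue $s_2=-\theta/(z+\theta)$ has eigenvector $(1,-(z+\theta)/\theta,1)^T$, which is symmetric in the states $\pm1$. The eigenvalue $s_1=z/(z+\theta)$ has eigenvector $(1,0,-1)^T$, which is antisymmetric. I would first verify this directly; the trace $\frac{2z}{z+\theta}=1+s_1+s_2$ gives a quick check.

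By Lemma \ref{lem1}, the second largest eigenvalue in modulus is $s_1$ when $\theta<1$ and $|s_2|$ when $\theta>1$; at $\theta=1$ we have $z=\theta=1$, so both equal $1/2$. It is cleaner, though, not to use Lemma \ref{lem1} at all. Non-extremality follows as soon as either $3s_1^2>1$ or $3s_2^2>1$. The reason is that the Kesten--Stigum bound can be applied to any eigenvalue $\lambda\neq 1$ of $P$: the reconstruction estimator built from the corresponding eigenvector then has nonvanishing correlation. So it suffices to show
\[
\frac{z}{z+\theta}>\frac{1}{\sqrt3}\quad (\theta<\theta_c^{(1)}),\qquad \frac{\theta}{z+\theta}>\frac{1}{\sqrt3}\quad(\theta>\theta_c^{(2)}).
\]
These are equivalent to $z>\frac{\theta}{\sqrt3-1}$ and $z<(\sqrt3-1)\theta$, respectively.

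I would avoid the Ferrari formula \eqref{f23} and use the monotonicity argument from Lemma \ref{lem1}. Since $f(z)=\left(\frac{\theta+z}{2\theta z}\right)^3$ is decreasing, $z^*>c$ holds if and only if $f(c)>c$. Set $\alpha=\sqrt3-1$.

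For the first condition take $c=\theta/\alpha$. Then $f(c)=\left(\frac{1+\alpha}{2\theta}\right)^3$, and the inequality $f(c)>c$ becomes $\frac{(1+\alpha)^3}{8\theta^3}>\frac{\theta}{\alpha}$, that is,
\[
\theta^4<\frac{\alpha(1+\alpha)^3}{8}=\frac{(\sqrt3-1)3\sqrt3}{8}.
\]
For the second condition take $c=\alpha\theta$. Then $f(c)=\left(\frac{1+\alpha}{2\alpha\theta}\right)^3$, and $z^*<c$ holds if and only if $f(c)<c$, that is,
\[
\theta^4>\frac{(1+\alpha)^3}{8\alpha^4}.
\]

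This yields explicit thresholds
\[
\theta_c^{(1)}=\left(\frac{3\sqrt3(\sqrt3-1)}{8}\right)^{1/4},\qquad \theta_c^{(2)}=\left(\frac{3\sqrt3}{8(\sqrt3-1)^4}\right)^{1/4}.
\]
The final step is a numerical check that these match the approximations $0.83$ and $1.226$ quoted in the theorem. For the first, $0.4755^{1/4}\approx0.830$. For the second, $(\sqrt3-1)^4\approx0.287$, so $0.6495/0.287\approx 2.26$ and $2.26^{1/4}\approx1.226$.

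The main obstacle is justifying Kesten--Stigum with the eigenvalue that is actually relevant. The conceptually delicate point is using the antisymmetric eigenvalue $s_1$ regardless of ordering, and citing \cite{34,35,36} in the form that allows any nontrivial eigenvalue. If only the version with $\lambda_2$ is accepted, the argument is still fine. On each stated range the relevant eigenvalue is indeed the dominant nontrivial one by Lemma \ref{lem1}, since $\theta_c^{(1)}<1<\theta_c^{(2)}$.
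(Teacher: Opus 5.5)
Your proof is correct. It uses the same criterion and spectrum as the paper but handles the decisive step differently. The paper takes the eigenvalues of \eqref{f22} and picks the dominant one via Lemma~\ref{lem1}. It then inserts the Ferrari root \eqref{f23} of the quartic into $3s_1^2-1$ and $3s_2^2-1$, and finds the sign changes only numerically with Maple; this is why $\theta_c^{(1)}$ and $\theta_c^{(2)}$ appear only as decimals. You instead use that $f(z)-z$ is strictly decreasing, the idea already behind Lemma~\ref{lem1}. Hence $z^*>c$ if and only if $f(c)>c$. Choosing $c=\theta/(\sqrt3-1)$ or $c=(\sqrt3-1)\theta$ makes $f(c)$ a pure power of $\theta$, so each Kesten--Stigum inequality becomes an explicit bound on $\theta^4$. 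Your algebra checks out, giving $\theta_c^{(1)}=\bigl(\tfrac{9-3\sqrt3}{8}\bigr)^{1/4}\approx0.8304$ and $\theta_c^{(2)}=\bigl(\tfrac{36+21\sqrt3}{32}\bigr)^{1/4}\approx1.2263$. The same computation with $k=2$ (so $\sqrt2-1$ in place of $\sqrt3-1$, with bounds on $\theta^3$) reproduces exactly the thresholds $\tfrac12\sqrt[3]{4\sqrt2-4}$ and $\tfrac12\sqrt[3]{28+20\sqrt2}$ that the paper quotes for $k=2$. Your route therefore gives a rigorous, computer-free proof with exact constants, and it works verbatim for every $k$. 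The paper's route gives only a numerical verification, plus a closed form for $z^*$ that this theorem never needs. One simplification: the ``obstacle'' you flag is not one. $\lambda_2$ is by definition the eigenvalue of largest modulus other than $1$, so $3\lambda^2>1$ for any eigenvalue $\lambda\neq1$ of $P$ immediately gives $3\lambda_2^2>1$. Neither the eigenvector-estimator argument nor the ordering from Lemma~\ref{lem1} is needed.
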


For $k=2$, the following result was established in \cite{27}:

\begin{remark}\label{rk2}. Let $k=2$,  then for the HC-Blume-Capel model on a \emph{wand} graph, the measure $\mu_0$ is not extremal for $\theta \in \left(0, \frac{1}{2}\sqrt[3]{4\sqrt{2}-4}\right) \cup \left(\frac{1}{2}\sqrt[3]{28+20\sqrt{2}}, +\infty\right)$, where $\frac{1}{2}\sqrt[3]{4\sqrt{2}-4} \approx 0.591$ and $\frac{1}{2}\sqrt[3]{28+20\sqrt{2}} \approx 1.915$. \end{remark}

\begin{theorem}\label{thm5} Let $k \geq 4$. Then, for the HC-Blume-Capel model on a \emph{wand} graph, the measure $\mu_0$ is not extremal for any $\theta > 0$. \end{theorem}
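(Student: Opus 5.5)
The plan is to apply the Kesten--Stigum criterion $k\lambda_2^2>1$ to the matrix \eqref{f22}, using the eigenvalues already computed in this section: $s_3=1$, $s_1=\frac{z}{z+\theta}$ and $s_2=-\frac{\theta}{z+\theta}$, where $z=z^*$ is the unique positive solution of \eqref{f20} (Lemma~\ref{lem1}). Unlike the case $k=3$, I do not expect to need the explicit Ferrari formula \eqref{f23}. A uniform lower bound on the second eigenvalue should suffice.

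\textbf{Step 1 (uniform bound).} Since $|s_1|+|s_2|=\frac{z+\theta}{z+\theta}=1$, we get
$$\lambda_2=\max\{|s_1|,|s_2|\}=\frac{\max\{z,\theta\}}{z+\theta}\ge \frac12 ,$$
and equality holds if and only if $z=\theta$. Hence
$$k\lambda_2^2\ \ge\ \frac{k}{4}\ \ge\ 1 \qquad\text{for } k\ge 4,$$
and the inequality is strict whenever $k\ge5$, for every $\theta>0$. This settles $k\ge5$ directly.

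\textbf{Step 2 (the equality case).} For $k=4$, strictness requires $z^*\neq\theta$. Substituting $z=\theta$ into \eqref{f20} gives $\theta=\left(\frac{2\theta}{2\theta^2}\right)^k=\theta^{-k}$, that is $\theta^{k+1}=1$, so $\theta=1$. Lemma~\ref{lem1} confirms that $z^*\neq\theta$ exactly when $\theta\neq1$. Therefore $4\lambda_2^2>1$ for all $\theta\neq1$, and $\mu_0$ is non-extremal there.

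\textbf{Step 3 (main obstacle: $k=4$, $\theta=1$).} At this point $z^*=1$ and the matrix becomes
$$P=\begin{bmatrix}\frac12&\frac12&0\\[2pt] \frac12&0&\frac12\\[2pt] 0&\frac12&\frac12\end{bmatrix},$$
with spectrum $\{1,\tfrac12,-\tfrac12\}$. Here $k\lambda_2^2=1$ exactly, so the Kesten--Stigum criterion is borderline and gives nothing. I would attack this case in the following order.

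1. First try a sharper sufficient condition for reconstruction that exploits the fact that two distinct eigenvalues of equal modulus $\tfrac12$ exist. For example, use a second-moment or census estimator built from a combination of the two eigenvectors $(1,0,-1)$ and $(1,-2,1)$, and check whether its variance stays bounded at criticality.
2. If that fails, check the statement of the theorem at this single point $\theta=1$. The borderline case may in fact be extremal, by the Martinelli--Sinclair--Weitz-type argument used later in Section~\ref{kthree}.
3. In that event, the honest conclusion is non-extremality for all $\theta>0$, $\theta\neq1$ when $k=4$, and for all $\theta>0$ when $k\ge5$.

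I expect this single critical point to be the only genuine difficulty. Everything else follows from the identity $|s_1|+|s_2|=1$.
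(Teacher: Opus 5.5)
Your Steps 1--2 are correct and are essentially the paper's argument. The paper splits into $0<\theta<1$ and $\theta>1$, uses Lemma~\ref{lem1} ($z^*>\theta$, resp.\ $z^*<\theta$) to get $\lambda_2=\max\{z,\theta\}/(z+\theta)>\frac12$, and concludes $k\lambda_2^2>\frac k4\ge 1$. Your identity $|s_1|+|s_2|=1$ gives the same bound in one line. Unlike the paper, it also settles $\theta=1$ for $k\ge5$, where $k\lambda_2^2=k/4>1$.

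The point $(k,\theta)=(4,1)$ is a genuine gap, but it is not something you are missing relative to the paper: the paper's proof has the same hole.
\begin{itemize}
\item The paper never treats $\theta=1$, and Lemma~\ref{lem1} only covers $\theta\neq1$.
\item At $\theta=1$ one has $z^*=1$, the matrix $P$ of your Step 3, $\lambda_2=\frac12$ and $4\lambda_2^2=1$. The strict Kesten--Stigum inequality therefore fails.
\item So for $k=4$ the claim ``for any $\theta>0$'' is not established at $\theta=1$ by the paper either. What the argument actually proves is exactly your restricted conclusion: all $\theta>0$ for $k\ge5$, and all $\theta\neq1$ for $k=4$.
\end{itemize}

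Neither remedy in your Step 3 will close this gap.
\begin{itemize}
\item A census statistic built from $(1,0,-1)$ and $(1,-2,1)$ is precisely the second-moment method behind Kesten--Stigum. At $k\lambda_2^2=1$, each of the $n$ generations contributes a comparable amount to its conditional variance. The squared signal-to-noise ratio therefore decays like $1/n$, and this is exactly where such linear estimators break down.
\item The Martinelli--Sinclair--Weitz criterion of Section~\ref{kthree} cannot certify extremality there either. At $\theta=1$ one has $\kappa=\frac12$, while $\gamma=1$ for this model because of the hard constraint, independently of $\theta$. To see this, take $A=\{x\}$ with all neighbours of $x$ other than $y$ fixed to $0$. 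Then spin $-1$ at $y$ forces $\sigma(x)=-1$, and spin $1$ at $y$ forces $\sigma(x)=1$. Hence $4\kappa\gamma=2>1$.
\end{itemize}

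Deciding the point $(4,1)$ would need a genuinely nonlinear analysis of the tree recursion exactly at the Kesten--Stigum threshold. Without that, the theorem should be stated with $\theta\neq1$ when $k=4$.
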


\begin{proof} Let $0 < \theta < 1$. In this case, we have
\[
s_{1} = \frac{z}{z+\theta}.
\]
The Kesten-Stigum condition takes the form
\[
k \cdot \left(\frac{\frac{z}{\theta}}{1+\frac{z}{\theta}}\right)^{2} > 1.
\]
By Lemma \ref{lem1}, for $\theta \in (0,1)$, we obtain $\frac{z}{\theta} > 1$. Since the function $f(x) = \frac{x}{1+x}$ is increasing, for $x > 1$ we have  $f(x)>f(1)=\frac{1}{2}.$ Then, it follows that
\[
k \cdot \left(\frac{\frac{z}{\theta}}{1+\frac{z}{\theta}}\right)^{2} > \frac{k}{4} \geq 1.
\]

Now, consider $\theta > 1$. In this case, we have
\[
\mid s_{2}\mid=\frac{\theta}{z+\theta}.
\]
The Kesten-Stigum condition now takes the form
\[
k \cdot \left(\frac{1}{1+\frac{z}{\theta}}\right)^{2} > 1.
\]
For $\theta > 1$, Lemma \ref{lem1} implies $\frac{z}{\theta} < 1$. Since the function $f(x) = \frac{1}{1+x}$ is decreasing, for $x < 1$ we have $f(x)>f(1)=\frac{1}{2}$. Thus, we obtain
\[
k \cdot \left(\frac{1}{1+\frac{z}{\theta}}\right)^{2} > \frac{k}{4} \geq 1.
\]
This inequality holds for $k \geq 4$, which implies that $\mu_{0}$ is not extremal for $k \geq 4$. The theorem is therefore proved.\end{proof}

\begin{remark}\label{rk2} Since the set of all Gibbs measures is convex-compact, it shows that for $k\geq4$ there exist non-translation-invariant extremal splitting Gibbs measures.\end{remark}

\section{Conditions for the Extremality of the Measure $\mu_0$ for $k=3$}\label{kthree}

The methods presented in \cite{35} are well-known for studying extremality. We begin by recalling the necessary definitions from \cite{35}. Removing an arbitrary edge $\langle x^{0},x^{1}\rangle = l \in L$ from the Cayley tree $\Gamma^{k}$ results in its division into two disjoint components, $\Gamma_{x^{0}}^{k}$ and $\Gamma_{x^{1}}^{k}$, each referred to as a semi-infinite tree or a semi-Cayley tree.

Consider a finite, rooted subtree $\Im \subset \Gamma^k$ such that its root coincides with the root of the half-tree $\Gamma^k_{x^0}$, and all vertices in $\Im$ lie within a finite depth from $x_0$. We assume that $\Im$ includes all vertices up to some fixed generation $n$ of the half-tree. The boundary $\partial\Im$ of the subtree $\Im$ consists of the nearest neighbors of its vertices that belong to $\Gamma_{x^{0}}^{k} \setminus \Im$. We identify subgraphs of $\Im$ with their set of vertices $A$, and their boundary by $\partial A$. We also mention that the boundary is a subset of $\left( \Im\cup \partial \Im \right)\backslash A.$  In \cite{35}, two key quantities, $\kappa$ and $\gamma$, are introduced, which play a crucial role in the study of the extremality of TISGMs. These quantities characterize the set of Gibbs measures $\mu_{\Im_x}^{\tau}$, where the boundary condition $\tau$ is fixed.

For a given rooted subtree $\Gamma_{x^{0}}^{k}$ and a vertex $x \in \Im$, we denote by $\Im _{x}$ the maximal subtree of $\Im$ with $x$ as its initial vertex. If $x$ is not the initial vertex of $\Im$, we define $\mu_{\Im_x}^{s}$ as the Gibbs measure in which the ``ancestor'' of $x$ has spin $s$, while the configuration at the lower boundary of $\Im_{x}$ (i.e., on $\partial\Im \setminus \{\text{ancestor of } x\}$) is given by $\omega$.

For two probability measures on $\Omega$, we define their distance by the norm:
\begin{equation}\nonumber
    \|\mu_1 - \mu_2 \|_x = \frac{1}{2} \sum_{i \in \{-1, 0, 1\}} \mid  \mu_1(\sigma(x) = i) - \mu_2(\sigma(x) = i) \mid.
\end{equation}

Let $\eta^{x, s}$ denote a configuration $\eta$ with spin at $x$ fixed at $s$. Following \cite{35}, we define:
\begin{equation}\nonumber
    \kappa \equiv \kappa(\mu) = \sup_{x \in \Gamma^k} \max_{x, s, s'} \|\mu_{\Im_x}^s - \mu_{\Im_x}^{s'}\|_x,
\end{equation}
\begin{equation}\nonumber
    \gamma \equiv \gamma(\mu) = \sup_{A \subset \Gamma^k} \max \|\mu_A^{\eta^{y, s}} - \mu_A^{\eta^{y, s'}}\|_x,
\end{equation}
where the second maximum is taken over all boundary conditions $\eta$, all $y \in \partial A$, all neighbors $x \in A$ of vertex $y$, and all spins $s, s' \in \{-1, 0, 1\}$.

To determine the condition for the extremality of the measure $\mu_0$ for $k=3$, we note from Lemma \ref{lem1} that $z > \theta$ for $0 < \theta < 1$ and $z < \theta$ for $\theta > 1$.

The parameter $\kappa$ has the particularly simple form \cite{35}:
\begin{equation}\nonumber
    \kappa = \frac{1}{2} \max_{i,j}\sum_{l \in \{-1, 0, 1\}} \mid  P_{il} - P_{jl}\mid .
\end{equation}

From this, it follows that $\kappa=0$ for $i = j$. For $i \neq j$, we compute
\begin{equation}\nonumber
    \sum_{l \in \{-1, 0, 1\}} \mid  P_{il} - P_{jl} \mid  =
    \begin{cases}
        \ \ 1, & \text{if } 0 < \theta < 1 \text{ and } ij = 0, \\
       \frac{2\theta}{z+\theta}, & \text{if } \theta \geq 1 \text{ and } ij = 0, \\
        \frac{2z}{z+\theta}, & \text{if } \theta > 0 \text{ and } ij = -1.
    \end{cases}
\end{equation}

Thus, for solutions $(z^*, z^*)$ with $0 < \theta < 1$, we obtain:
\begin{equation}\nonumber
    \kappa = \frac{z}{z+\theta},
\end{equation}
while for $\theta \geq 1$, we have:
\begin{equation}\nonumber
    \kappa = \frac{\theta}{z+\theta}.
\end{equation}

We now seek an estimate for $\gamma$, following the approach of \cite{35}, p. 15:
\begin{equation}\nonumber
    \gamma = \max \{ \|\mu_A^{\eta^{y,-1}} - \mu_A^{\eta^{y,0}}\|_x, \|\mu_A^{\eta^{y,-1}} - \mu_A^{\eta^{y,1}}\|_x, \|\mu_A^{\eta^{y,0}} - \mu_A^{\eta^{y,1}}\|_x \}.
\end{equation}

To estimate the constant $\gamma(z_{i},z_{i})$ depending on the boundary law labeled by $i$, for our model, we prove several lemmas.

\begin{lemma}\label{lem2} \textit{Recall the matrix $P$ given by formula \eqref{f22}, and let $\mu = \mu(\theta)$ be the corresponding Gibbs measure. Then, for any subset $A \subset \Im$ (where $\Im$ is the initial full subtree $\Gamma^k$), any boundary configuration $\eta$, any pair of spins $(s_1, s_2)$, any vertex $y \in \partial A$, and any neighbor $x \in A$ of $y$, we have
\begin{equation}\nonumber
\|\mu_A^{\eta^{y, s_{1}}} - \mu_A^{\eta^{y, s_{2}}}\|_x\leq\{\mid p^{0}(0)-p^{2}(0)\mid, \mid p^{0}(1)-p^{1}(1)\mid, \mid p^{0}(2)-p^{1}(2)\mid, \mid p^{0}(0)-p^{1}(0)\mid\},
\end{equation}
where $p^{t}(s):=\mu_{A}^{\eta^{y,t}}(\sigma(x)=s).$}
\end{lemma}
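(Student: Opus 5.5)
We follow the scheme of \cite{35} (and its use in \cite{27,30}): write the marginal at $x$ under boundary $\eta^{y,t}$ via the effective field acting on $x$ from the rest of $A$, and compare the three resulting distributions.

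\emph{Step 1 (effective field).} Since the tree has no cycles, once $y$ is removed from $A$ the influence of $\eta$ and of the rest of $A$ on $x$ enters only through a positive vector $(q_{-1},q_0,q_1)$, independent of $t$. The spin at $y$ enters only through the edge factor $a_{t,s}\,\theta^{(t-s)^2}$, with the weights of the wand graph as in \eqref{f22}. Hence
\[
p^{t}(s)=\frac{a_{t,s}\theta^{(t-s)^2}q_s}{\sum_{u}a_{t,u}\theta^{(t-u)^2}q_u}.
\]
Here I label the states $-1,0,1$ by $0,1,2$, as in the statement. Explicitly, $p^{1}$ (that is, $y$ at spin $0$) is supported on $\{0,2\}$. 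The distribution $p^{0}$ (spin $-1$) is supported on $\{0,1\}$, and $p^{2}$ (spin $+1$) is supported on $\{1,2\}$. Each is a reweighting of $q$ restricted to two states.

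\emph{Step 2 (reducing each total-variation distance to one coordinate).} I treat the three pairs separately.

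For the pair $(0,2)$, the supports meet only in state $1$. With $a=p^{0}(0)$ and $b=p^{2}(2)$ we have $p^{0}(1)=1-a$ and $p^{2}(1)=1-b$. Then $\tfrac12(a+b+|a-b|)=\max(a,b)$. By the $-1\leftrightarrow 1$ symmetry of the construction this is controlled by $|p^{0}(0)-p^{2}(0)|=p^{0}(0)$, possibly after relabeling.

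For the pair $(0,1)$, the supports meet in state $0$. With $c=p^{1}(0)$, so that $p^{1}(2)=1-c$, and $p^{0}(1)=1-a$, we get
\[
\tfrac12\bigl(|a-c|+(1-a)+(1-c)\bigr)=\max(1-a,\,1-c)=\max\bigl(|p^{0}(1)-p^{1}(1)|,\,|p^{0}(2)-p^{1}(2)|\bigr),
\]
since $p^{1}(1)=0$ and $p^{0}(2)=0$.

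For the pair $(1,2)$, the same computation applies after the $\pm1$ reflection. Its terms are again of the listed forms $|p^{0}(\cdot)-p^{1}(\cdot)|$, or they equal $|p^{0}(0)-p^{1}(0)|$ once the identity $\tfrac12\sum_s|\mu(s)-\nu(s)|=\max_{B}|\mu(B)-\nu(B)|$ is used.

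\emph{Step 3 (conclusion).} Taking the maximum over the three pairs bounds the left side by the maximum of the four listed quantities. This maximum is what the braces in the statement mean.

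\emph{Main obstacle.} The delicate point is the bookkeeping in Step 2. Because of the hard-core zeros, each total-variation distance collapses to a single coordinate difference, and the indices must be matched exactly with the four terms in the statement. This may require invoking the $\pm1$ symmetry of \eqref{f22}, namely $z_1=z_2$, to fold the $(1,2)$ case onto the listed ones.
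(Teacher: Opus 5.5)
\paragraph{Verdict: there is a genuine gap.} Your Step 1 and your computation for the pair $(0,1)$ are correct. Step 2 breaks down for the pairs $(0,2)$ and $(1,2)$.

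\paragraph{Where Step 2 fails.} Set $a=p^{0}(0)$, $b=p^{2}(2)$, $c=p^{1}(0)$. You correctly get $\|p^{0}-p^{2}\|=\max(a,b)$, and the same computation gives $\|p^{1}-p^{2}\|=\max(c,1-b)$. The right-hand side of the lemma, however, is $\max\{a,1-a,1-c,|a-c|\}$, which does not involve $b$ at all.

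The symmetry $z_1=z_2$ of \eqref{f22} does not make your effective field $q$ symmetric for a \emph{fixed} $\eta$. The global spin flip sends $\eta\mapsto-\eta$ and $(a,b,c)\mapsto(b,a,1-c)$, so it relates two different boundary conditions. Since all four listed terms are measured against $p^{0}$, ``relabeling'' changes the inequality you are trying to prove.

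\paragraph{The pointwise inequality is false.} Write $u=q_{-1}/(\theta q_0)$ and $v=q_1/(\theta q_0)$. Then $a=\frac{u}{1+u}$, $b=\frac{v}{1+v}$, $c=\frac{u}{u+v}$. Whenever $1<u<v$, the number $b$ strictly exceeds each of $a$, $1-a$, $1-c$, $|a-c|$.

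This case actually occurs. Take $k=3$, $\theta=1$, and $A=\{x,w_1,w_2,w_3\}$, with $y$ the parent of $x$ and $w_i$ its children. Put $\eta=+1$ on the children of $w_1,w_2$ and $\eta=-1$ on those of $w_3$. Then $q\propto(2,1,4)$, so
$p^{0}=(\frac23,\frac13,0)$, $p^{1}=(\frac13,0,\frac23)$, $p^{2}=(0,\frac15,\frac45)$.
Hence $\|p^{0}-p^{2}\|_x=\frac45$, while the four listed quantities have maximum $\frac23$. No bookkeeping can close this step.

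\paragraph{How the paper gets around this.} The paper's proof reaches the listed terms only because it writes $p^{1}=(\frac12,0,\frac12)$ and $p^{2}=(0,1-A,A)$, with the same number $A$ as in $p^{0}=(A,1-A,0)$. In other words, it tacitly takes the free marginal at $x$ to be $\pm1$-symmetric, i.e.\ $q_{-1}=q_1$ in your notation. Under that assumption your three distances are $A$, $\max\{\frac12,1-A\}$ and $\max\{\frac12,1-A\}$. All of these are among the listed terms, which is what Lemma~\ref{lem3} records.

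\paragraph{Two ways to complete your argument.}
\begin{enumerate}
\item State the symmetry $q_{-1}=q_1$ as an explicit hypothesis, as the paper implicitly does.
\item Prove only what $\gamma$ actually needs. The set of attainable triples $(a,b,c)$ is invariant under $(a,b,c)\mapsto(b,a,1-c)$. Hence $\sup\max(a,b)=\sup a$ and $\sup\max(c,1-b)=\sup\max(1-a,1-c)$. This yields $\gamma\le\sup\max\{|p^{0}(0)-p^{2}(0)|,|p^{0}(1)-p^{1}(1)|,|p^{0}(2)-p^{1}(2)|\}$, with the suprema taken over all subsets, boundary conditions $\eta$, and neighbouring pairs $y,x$.
\end{enumerate}
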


\begin{proof} Let $p_{s}=\mu_{A}^{\eta^{y,free}}(\sigma(x)=s), s=0,1.$ By the definition of the matrix $P$, we have
\begin{equation}\nonumber
p^{0}(0)=\frac{zp_{0}}{zp_{0}+\theta p_{1}},\ \ p^{0}(1)=\frac{\theta p_{1}}{zp_{0}+\theta p_{1}}, \ \ p^{0}(2)=0;
\end{equation}
\begin{equation}\label{f24}
p^{1}(0)=\frac{1}{2},\ \ p^{1}(1)=0, \ \ p^{1}(2)=\frac{1}{2};\tag{24}
\end{equation}
\begin{equation}\nonumber
p^{2}(0)=0,\ \ p^{2}(1)=\frac{\theta p_{1}}{zp_{0}+\theta p_{1}}, \ \ p^{2}(2)=\frac{zp_{0}}{zp_{0}+\theta p_{1}},
\end{equation}
and hence the proposition follows from the following Lemmas 3 and 4.
\end{proof}

Clearly,
\[
\text{a) } \theta<1 \Rightarrow z>\theta; \qquad
\text{b) } \theta>1 \Rightarrow z<\theta.
\]
Let $p_0\ge0$, $p_1\ge0$, $p_0+p_1=1$, and define the probability vectors
\begin{equation}\nonumber
p^{0}=\left(\frac{z p_{0}}{z p_{0}+\theta p_{1}},\ \frac{\theta p_{1}}{z p_{0}+\theta p_{1}},\ 0\right), \quad
p^{1}=\left(\frac12,\ 0,\ \frac12\right),
\end{equation}
\begin{equation}\nonumber
p^{2}=\left(0,\ \frac{\theta p_{1}}{z p_{0}+\theta p_{1}},\ \frac{z p_{0}}{z p_{0}+\theta p_{1}}\right).
\end{equation}

\begin{lemma}\label{lem3} \textit{Let
\begin{equation}\nonumber
A:=\frac{z p_{0}}{z p_{0}+\theta p_{1}},\qquad B:=1-A=\frac{\theta p_{1}}{z p_{0}+\theta p_{1}}.
\end{equation}
Then
\begin{equation}\nonumber
\max_{i,j,k}\lvert p^{i}(k)-p^{j}(k)\rvert=\max\{A,\ 1-A\}.
\end{equation}
Consequently, the value $1$ is attained if and only if $A\in\{0,1\}$, i.e., when $p_0\in\{0,1\}$. Therefore, for any choice
\begin{equation}\nonumber
0<p_0<1,\qquad p_1=1-p_0\in(0,1),
\end{equation}
we have $\max_{i,j,k}\lvert p^{i}(k)-p^{j}(k)\rvert<1$. Moreover, the minimal possible value of this maximum equals $\frac{1}{2}$ and is attained at
\begin{equation}\nonumber
p_0=\frac{\theta}{z+\theta},\qquad p_1=\frac{z}{z+\theta},
\end{equation}
when $A=\frac{1}{2}$.}
\end{lemma}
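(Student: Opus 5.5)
The plan is a direct, exhaustive computation. The three vectors have only a few distinct entries, so I will list every difference $\lvert p^{i}(k)-p^{j}(k)\rvert$ and compare them. In the notation of the statement, $p^{0}=(A,B,0)$, $p^{1}=(\tfrac12,0,\tfrac12)$ and $p^{2}=(0,B,A)$, with $A+B=1$ and $A,B\in[0,1]$.

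\textbf{Step 1: list all differences.} For each coordinate I compare the three pairs $(0,1)$, $(0,2)$, $(1,2)$.
\begin{itemize}
\item In the first coordinate the differences are $\lvert A-\tfrac12\rvert$, $A$ and $\tfrac12$.
\item In the second coordinate they are $B$, $0$ and $B$.
\item In the third coordinate they are $\tfrac12$, $A$ and $\lvert \tfrac12-A\rvert$.
\end{itemize}
Hence
\[
\max_{i,j,k}\lvert p^{i}(k)-p^{j}(k)\rvert=\max\{A,\,1-A,\,\tfrac12,\,\lvert A-\tfrac12\rvert\}.
\]

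\textbf{Step 2: reduce the maximum.} Since $A+(1-A)=1$, we have $\max\{A,1-A\}\ge \tfrac12$. Also $\lvert A-\tfrac12\rvert\le \tfrac12$ because $A\in[0,1]$. So the maximum collapses to $\max\{A,1-A\}$, which is the first claim.

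\textbf{Step 3: when the value $1$ is attained.} We have $\max\{A,1-A\}=1$ if and only if $A\in\{0,1\}$. Because $z>0$ (Lemma~\ref{lem1}) and $\theta>0$, the equality $A=0$ holds if and only if $p_0=0$, and $A=1$ holds if and only if $p_1=0$, that is, $p_0=1$. Consequently, for $0<p_0<1$ we get $0<A<1$ and the maximum is strictly less than $1$.

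\textbf{Step 4: the minimal value.} The function $A\mapsto\max\{A,1-A\}$ attains its minimum $\tfrac12$ exactly at $A=\tfrac12$. This means $zp_0=\theta p_1$, and together with $p_0+p_1=1$ it gives $p_0=\theta/(z+\theta)$ and $p_1=z/(z+\theta)$. This value of $A$ is admissible, since $A$ ranges continuously over $[0,1]$ as $p_0$ does.

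There is no genuine obstacle in this argument. The only point needing care is Step 1: one must not overlook any pair of indices or any coordinate, and must note that the constant $\tfrac12$ and the term $\lvert A-\tfrac12\rvert$ are always dominated by $\max\{A,1-A\}$.
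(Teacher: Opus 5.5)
Your proposal is correct and follows essentially the same route as the paper: you list every coordinate difference among $p^0=(A,1-A,0)$, $p^1=(\tfrac12,0,\tfrac12)$ and $p^2=(0,1-A,A)$, reduce $\max\{A,1-A,\tfrac12,\lvert A-\tfrac12\rvert\}$ to $\max\{A,1-A\}$, and then solve $zp_0=\theta(1-p_0)$ to find the minimizer. Your justification that $\max\{A,1-A\}\ge\tfrac12$ (from $A+(1-A)=1$) and your use of $z>0$ (Lemma~\ref{lem1}) and $\theta>0$ to get $A\in\{0,1\}\iff p_0\in\{0,1\}$ are the same points the paper relies on, only stated slightly more explicitly.
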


\begin{proof}
We can write:
\begin{equation}\nonumber
p^0=(A,1-A,0),\quad p^1=\left(\tfrac12,0,\tfrac12\right),\quad p^2=(0,1-A,A).
\end{equation}
The coordinate differences between these three vectors (for all $i,j\in\{0,1,2\}$ and $k\in\{0,1,2\}$) give the set of values
\begin{equation}\nonumber
\{\,A,\ 1-A,\ \tfrac12,\ \mid A-\tfrac12\mid\,\}.
\end{equation}
Indeed:
\begin{equation}\nonumber
\begin{aligned}
&\mid p^0(0)-p^2(0)\mid=A,\quad \mid p^0(2)-p^2(2)\mid=A,\quad \mid p^0(1)-p^2(1)\mid=0,\\
&\mid p^0(1)-p^1(1)\mid=1-A,\quad \mid p^2(1)-p^1(1)\mid=1-A,\\
&\mid p^0(2)-p^1(2)\mid=\tfrac12,\quad \mid p^1(0)-p^2(0)\mid=\tfrac12,\\
&\mid p^0(0)-p^1(0)\mid=\mid A-\tfrac12\mid,\quad \mid p^1(2)-p^2(2)\mid=\mid\tfrac12-A\mid.
\end{aligned}
\end{equation}
Since $\tfrac12\le \max\{A,1-A\}$ and $\mid A-\tfrac12\mid\le \max\{A,1-A\}$ for any $A\in[0,1]$, it follows that
\begin{equation}\nonumber
\max_{i,j,k}\mid p^i(k)-p^j(k)\mid=\max\{A,1-A\}.
\end{equation}
The maximum equals $1$ if and only if $A\in\{0,1\}$, which is equivalent to $p_0\in\{0,1\}$ (for positive $z,\theta$ ensuring a positive denominator $z p_0+\theta p_1$). Therefore, for any $p_0\in(0,1)$ (and $p_1=1-p_0$), we have $A\in(0,1)$ and hence $\max_{i,j,k}\mid p^i(k)-p^j(k)\mid<1$.

To minimize the maximum, we set $A=1-A$, i.e., $A=\tfrac12$. Solving
\begin{equation}\nonumber
\frac{z p_0}{z p_0+\theta(1-p_0)}=\frac12
\quad\Longleftrightarrow\quad
z p_0=\theta(1-p_0),
\end{equation}
gives
\begin{equation}\nonumber
p_0=\frac{\theta}{z+\theta},\quad p_1=1-p_0=\frac{z}{z+\theta}.
\end{equation}
For this choice $A=\tfrac12$ and thus
\begin{equation}\nonumber
\max_{i,j,k}\mid p^i(k)-p^j(k)\mid=\tfrac12.
\end{equation}
The lemma is proved.
\end{proof}

\begin{lemma}\label{lem4} \textit{Let $\theta>0$.
a) Then
\begin{equation}\label{f25}
\max\{A,1-A\}=A \tag{25}
\end{equation}
holds whenever
\begin{equation}\nonumber
p_{0}\geq\frac{\theta}{z+\theta}.
\end{equation}
b) Then
\begin{equation}\nonumber
\max\{A,1-A\}=1-A
\end{equation}
holds whenever
\begin{equation}\nonumber
p_{0}\leq\frac{\theta}{z+\theta}.
\end{equation}}
\end{lemma}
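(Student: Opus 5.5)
The statement reduces to a monotonicity comparison of $A$ with $\tfrac12$. Since $B=1-A$, we have $\max\{A,1-A\}=A$ exactly when $A\ge \tfrac12$, and $\max\{A,1-A\}=1-A$ exactly when $A\le \tfrac12$. So the whole task is to translate the conditions $A\ge\tfrac12$ and $A\le\tfrac12$ into conditions on $p_0$, using $p_1=1-p_0$ and the positivity of $z,\theta$.

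First I would rewrite $A$ as a function of $p_0$ alone:
\[
A(p_0)=\frac{zp_0}{zp_0+\theta(1-p_0)}=\frac{zp_0}{(z-\theta)p_0+\theta}.
\]
The denominator is a convex combination of the positive numbers $z$ and $\theta$, so it is positive on $[0,1]$. Then I would clear denominators in $A\ge\tfrac12$. This gives
\[
2zp_0\ge zp_0+\theta(1-p_0),
\]
that is, $zp_0\ge \theta(1-p_0)$, that is, $(z+\theta)p_0\ge\theta$. Since $z+\theta>0$, this is equivalent to $p_0\ge \frac{\theta}{z+\theta}$, which proves part a).

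Part b) follows by reversing all the inequalities in the same computation: $A\le\tfrac12$ holds if and only if $p_0\le\frac{\theta}{z+\theta}$. As a consistency check, one can also note that
\[
A'(p_0)=\frac{z\theta}{\bigl((z-\theta)p_0+\theta\bigr)^2}>0,
\]
so $A$ is strictly increasing in $p_0$. By Lemma \ref{lem3}, $A=\tfrac12$ exactly at $p_0=\theta/(z+\theta)$, which confirms both cases at once. At the boundary value both expressions equal $\tfrac12$, so the two non-strict statements agree there.

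There is no real obstacle. The only point needing care is the sign of the denominator when the inequality is multiplied through, and that is settled by $z,\theta>0$ together with $p_0\in[0,1]$. In particular the argument holds for every $\theta>0$, regardless of whether $z>\theta$ or $z<\theta$ (Lemma \ref{lem1}).
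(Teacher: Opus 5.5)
Your proof is correct and follows the paper's argument: you reduce to $A\ge\tfrac12$ (resp.\ $A\le\tfrac12$), write $A=\frac{zp_0}{(z-\theta)p_0+\theta}$, check that the denominator is positive, and clear it to obtain $(z+\theta)p_0\ge\theta$. The only cosmetic difference is that you get positivity of the denominator in one step by viewing it as the convex combination $zp_0+\theta(1-p_0)$, whereas the paper splits into the cases $z\ge\theta$ and $z<\theta$.
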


\begin{proof}
Assume $\theta>0$. First, we prove part a). From equality \eqref{f25} we get the inequality
\begin{equation}\label{f26}
\frac{z p_{0}}{(z-\theta)p_{0}+\theta}\geq\frac12. \tag{26}
\end{equation}
Let us solve \eqref{f26}. If $z\geq\theta$, then
\begin{equation}\nonumber
(z-\theta)p_{0}+\theta\geq\theta>0
\end{equation}
since $p_{0}\geq0$. If $z<\theta$, then
\begin{equation}\nonumber
(z-\theta)p_{0}+\theta\geq z>0
\end{equation}
since $p_{0}\leq1$ and $z>0$.

Thus, we can multiply \eqref{f26} by the denominator without changing the inequality sign:
\begin{equation}\nonumber
2z\,p_{0} \ \ge\ (z-\theta)\,p_{0}+\theta
\quad\Longleftrightarrow\quad
(z+\theta)\,p_{0} \ \ge\ \theta.
\end{equation}
This yields
\begin{equation}\nonumber
p_{0}\ \ge\ \frac{\theta}{z+\theta}.
\end{equation}
This proves part a). For part b) we solve the inequality
\begin{equation}\label{f27}
\frac{z p_{0}}{(z-\theta)p_{0}+\theta}\leq\frac12, \tag{27}
\end{equation}
which leads to the stated condition. The lemma is proved.
\end{proof}

\begin{proposition}\label{prop1}Regardless of the possible values $(z,z)$ (i.e., the unique solution of system \eqref{f5})

1) If $\theta>0$ and $p_{0}\geq\frac{\theta}{z+\theta}$, then
\begin{equation}\nonumber
\gamma(z,z)\leq\frac{z p_{0}}{(z-\theta)p_{0}+\theta};
\end{equation}

2) If $\theta>0$ and $p_{0}\leq\frac{\theta}{z+\theta}$, then
\begin{equation}\nonumber
\gamma(z,z)\leq\frac{\theta(1-p_{0})}{(z-\theta)p_{0}+\theta}.
\end{equation}
\end{proposition}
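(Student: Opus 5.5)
The plan is to combine Lemmas \ref{lem2}, \ref{lem3} and \ref{lem4} directly. Fix an arbitrary subset $A\subset\Im$, a boundary configuration $\eta$, a vertex $y\in\partial A$, a neighbour $x\in A$ of $y$, and spins $s,s'$. Every quantity $\|\mu_A^{\eta^{y,s}}-\mu_A^{\eta^{y,s'}}\|_x$ is computed from the probability vectors $p^{0},p^{1},p^{2}$ in \eqref{f24}. These vectors are obtained by conditioning the free-boundary marginal $(p_0,p_1)$ at $x$ on the spin at $y$ through the rows of the matrix \eqref{f22}. By Lemma \ref{lem2}, each such distance is bounded by the largest coordinate difference $\max_{i,j,k}\lvert p^i(k)-p^j(k)\rvert$. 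Lemma \ref{lem3} identifies this maximum as $\max\{A,1-A\}$, where $A=\frac{zp_0}{zp_0+\theta p_1}$. Taking the supremum over $A,\eta,y,x,s,s'$ in the definition of $\gamma$ then gives $\gamma(z,z)\le\max\{A,1-A\}$, with $p_0$ the relevant free marginal.

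The next step is to rewrite $A$ in the form used in the statement. Substituting $p_1=1-p_0$ gives $zp_0+\theta p_1=(z-\theta)p_0+\theta$. Hence
\[
A=\frac{zp_0}{(z-\theta)p_0+\theta},\qquad 1-A=\frac{\theta(1-p_0)}{(z-\theta)p_0+\theta}.
\]
The positivity of the denominator for every $p_0\in[0,1]$ was already checked in the proof of Lemma \ref{lem4}: it is at least $\theta$ when $z\ge\theta$, and at least $z$ when $z<\theta$.

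Finally I split into the two regimes of Lemma \ref{lem4}. If $p_0\ge\frac{\theta}{z+\theta}$, part a) gives $\max\{A,1-A\}=A$, which yields item 1). If $p_0\le\frac{\theta}{z+\theta}$, part b) gives $\max\{A,1-A\}=1-A$, which yields item 2). At the boundary value $p_0=\frac{\theta}{z+\theta}$ both expressions equal $\frac12$, so the two cases are consistent. No assumption on whether $\theta<1$ or $\theta>1$ is needed, which matches the phrase ``regardless of the possible values $(z,z)$''. This is because Lemma \ref{lem4} holds for all $\theta>0$.

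The main obstacle is the first step, the reduction from the supremum defining $\gamma$ to the single vector-difference bound. Lemma \ref{lem2} as stated lists only four specific differences, not the full maximum. I would justify the reduction by noting two facts. First, the total variation distance between two probability vectors on three points equals the sum of their positive coordinate differences. Second, for each pair among $p^0,p^1,p^2$ this sum equals one of the values $A$, $1-A$, $\tfrac12$ or $\lvert A-\tfrac12\rvert$. For example, $\|p^0-p^2\|=A$, $\|p^0-p^1\|=\max\{1-A,\tfrac12\}$ and $\|p^1-p^2\|=\max\{1-A,\tfrac12\}$. Each of these is at most $\max\{A,1-A\}$, as observed in Lemma \ref{lem3}. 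Here $p_0$ is understood as the free-boundary marginal at $x$, and the bound is applied configuration by configuration.
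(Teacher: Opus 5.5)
You follow the paper's route. Its proof of Proposition~\ref{prop1} is the single line ``this follows from the above lemmas'', and you spell out that chain: formulas \eqref{f24}, then Lemma~\ref{lem3}, then the case split of Lemma~\ref{lem4}. Your algebra is correct for $A$, $1-A$, the positive denominator $(z-\theta)p_0+\theta$, and the common value $\tfrac12$ at $p_0=\frac{\theta}{z+\theta}$. The step you single out as the main obstacle is not really one. The four differences listed in Lemma~\ref{lem2} are $A$, $1-A$, $\tfrac12$ and $|A-\tfrac12|$, so their maximum is already $\max\{A,1-A\}$. Moreover, on a three-point space the coordinate differences of two probability vectors sum to zero, so the total variation distance always equals the largest coordinate difference. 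Your values $\|p^0-p^2\|=A$ and $\|p^0-p^1\|=\|p^1-p^2\|=\max\{1-A,\tfrac12\}$ are correct, and they give a direct proof of Lemma~\ref{lem2}, which the paper only sketches.

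The step that does not hold as written is ``taking the supremum over $A,\eta,y,x,s,s'$ gives $\gamma(z,z)\le\max\{A,1-A\}$ with $p_0$ the relevant free marginal''. The number $p_0=\mu_A^{\eta^{y,free}}(\sigma(x)=0)$ (paper's labelling, i.e.\ spin $-1$ at $x$) changes with the region, $\eta$, $x$ and $y$. So the supremum only yields $\gamma\le\sup\max\{A,1-A\}$ over all attainable $p_0$, and that supremum is $1$. To see this, take the region to be the single vertex $x$ and give some boundary neighbour of $x$ other than $y$ the spin $+1$. Since $\{-1,1\}$ is not an edge of the wand, this forces $p_0=0$, so the quantity $A$ vanishes and $\max\{A,1-A\}=1$.

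What you have actually proved is the configuration-wise bound $\|\mu_A^{\eta^{y,s}}-\mu_A^{\eta^{y,s'}}\|_x\le\max\{A,1-A\}$, split into the two regimes of Lemma~\ref{lem4}. That is also all the paper's one-line proof proves. Your closing remark that the bound is ``applied configuration by configuration'' is the right reading, but it contradicts the earlier supremum sentence. You should state the result in that form. Turning it into a bound on $\gamma$ with one fixed $p_0$, as the paper later does with $p_0=\tfrac12$, would need an extra argument restricting the attainable $p_0$. Neither your proposal nor the paper supplies one.
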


\begin{proof}
This follows from the above lemmas.
\end{proof}

\begin{remark}\label{rk3}
Computer analysis shows that if $p_{0}\in (0,\frac12)\cup(\frac12,1)$, then the extreme interval of the measure $\mu_{0}$ intersects the non-extreme interval. Therefore, only for $p_{0}=\frac12$ does it completely fill the remaining part of the non-extreme interval.
\end{remark}

Now, for the measure $\mu_0$ we check the extremality condition: $3\kappa\gamma<1$. Let $p_{0}=\frac12$. For $0<\theta<1$ and $\frac{\theta}{z+\theta}\leq\frac12$, according to Proposition 1, this condition becomes
\begin{equation}\nonumber
3\kappa\gamma-1 = 3\cdot\frac{z}{z+\theta}\cdot\frac{z p_{0}}{(z-\theta)p_{0}+\theta} - 1 < 0,
\end{equation}
and for $\theta\geq 1$ and $\frac{\theta}{z+\theta}\geq\frac12$,
\begin{equation}\nonumber
3\kappa\gamma-1 = 3\cdot\frac{\theta}{z+\theta}\cdot\frac{\theta(1-p_{0})}{(z-\theta)p_{0}+\theta} - 1 < 0.
\end{equation}
By computer analysis, these inequalities hold for $\theta_{cr}^{(1)}<\theta<\theta_{cr}^{(2)}$ (see Fig. 2).

Thus, we have the following

\begin{theorem}\label{thm6} Let $k=3$. Then, for the HC-Blume-Capel model on a \textquotedblleft wand\textquotedblright\ graph, the measure $\mu_0$ is extremal for $\theta_c^{(1)}<\theta<\theta_c^{(2)}$, where $\theta_c^{(1)}\approx0.83$ and $\theta_c^{(2)}\approx1.226$. \end{theorem}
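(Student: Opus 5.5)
The plan is to apply the Martinelli--Sinclair--Weitz criterion from \cite{35}: a TISGM on $\Gamma^k$ is extremal whenever $k\kappa\gamma<1$. For $k=3$ it therefore suffices to show $3\kappa\gamma<1$ for $\theta\in(\theta_c^{(1)},\theta_c^{(2)})$, using the exact value of $\kappa$ computed above and the bound on $\gamma$ from Proposition~\ref{prop1} with $p_0=\frac12$. Instead of the Ferrari formula \eqref{f23}, I will reparametrize by $t=z/\theta$, which removes all numerics.

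\emph{Step 1 (reparametrization).} Substituting $z=t\theta$ into \eqref{f20} with $k=3$ gives $t\theta=\bigl(\frac{1+t}{2t\theta}\bigr)^3$, that is,
\[
\theta^4=\frac{(1+t)^3}{8t^4}=:g(t).
\]
The function $g$ is strictly decreasing on $(0,\infty)$, since $\frac{d}{dt}\log g=\frac{3}{1+t}-\frac4t<0$. Hence $t\mapsto\theta=g(t)^{1/4}$ is a decreasing bijection of $(0,\infty)$ onto itself with $g(1)=1$. This recovers Lemma~\ref{lem1}: $\theta<1$ if and only if $t>1$.

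\emph{Step 2 (the quantities in terms of $t$).} By the computation preceding Lemma~\ref{lem2}:
\begin{itemize}
\item for $\theta<1$, $\kappa=\frac{t}{1+t}$;
\item for $\theta\ge1$, $\kappa=\frac{1}{1+t}$.
\end{itemize}
With $p_0=\frac12$, the condition $\frac{\theta}{z+\theta}\le\frac12$ is exactly $t\ge1$, so the two cases of Proposition~\ref{prop1} match the two ranges of $\theta$. The bounds become
\[
\gamma\le\frac{z}{z+\theta}=\frac{t}{1+t}\quad(\theta<1),\qquad \gamma\le\frac{\theta}{z+\theta}=\frac1{1+t}\quad(\theta\ge1).
\]

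\emph{Step 3 (solving $3\kappa\gamma<1$).} For $\theta<1$ the condition reads $3\bigl(\frac{t}{1+t}\bigr)^2<1$, i.e.
\[
t<\frac{1}{\sqrt3-1}=\frac{\sqrt3+1}{2}.
\]
For $\theta\ge1$ it reads $3(1+t)^{-2}<1$, i.e. $t>\sqrt3-1$. So the criterion holds exactly for
\[
t\in\Bigl(\sqrt3-1,\ \tfrac{\sqrt3+1}{2}\Bigr),
\]
an interval containing $t=1$. By the monotonicity in Step 1, this corresponds to $\theta\in(\theta_c^{(1)},\theta_c^{(2)})$ with
\[
\theta_c^{(1)}=g\bigl(\tfrac{\sqrt3+1}{2}\bigr)^{1/4}\approx0.83,\qquad \theta_c^{(2)}=g(\sqrt3-1)^{1/4}\approx1.226.
\]
These are the same thresholds as in Theorem~\ref{thm3}, where $3s_i^2=1$, so the extremal and non-extremal regions fit together exactly. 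This gives closed forms for the constants in place of the Maple plot.

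\emph{Main obstacle.} The delicate step is justifying the $\gamma$-bound at $p_0=\frac12$. In Lemma~\ref{lem2}, $p_s$ is the free-boundary marginal at $x$, which depends on $A$ and $\eta$, whereas $\gamma$ is a supremum over all of these. I would first try to show, via Lemmas~\ref{lem3}--\ref{lem4}, that the relevant differences $|p^i(s)-p^j(s)|$ are controlled by $\max\{A,1-A\}$. The difficulty is that this maximum must then be bounded uniformly over all admissible $(p_0,p_1)$, not only at the single value $p_0=\frac12$. If that uniform control fails, the fallback is to restrict to the range of free marginals that can actually occur, using the recursion \eqref{f4}, and to check that the worst case still stays inside the $t$-interval above.
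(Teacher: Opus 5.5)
Your argument is the paper's: the Martinelli--Sinclair--Weitz condition $3\kappa\gamma<1$, with the exact $\kappa$ and the bound of Proposition~\ref{prop1} evaluated at $p_0=\frac12$. The substitution $t=z/\theta$ is a real improvement, since it gives the thresholds in closed form, $\theta_c^{(1)}=\bigl(\frac{9-3\sqrt3}{8}\bigr)^{1/4}$ and $\theta_c^{(2)}=\bigl(\frac{36+21\sqrt3}{32}\bigr)^{1/4}$. But the step you flag as the main obstacle is a genuine gap, the paper does not close it either, and your fallback cannot close it. The number $p_0$ is the free marginal at $x$, which is fixed by $(A,\eta)$, while $\gamma$ is a supremum over all such choices. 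By Lemma~\ref{lem3}, $\sup_{p_0}\max\{A,1-A\}=1$, so Proposition~\ref{prop1} only gives $\gamma\le 1$. Restricting to marginals that actually occur does not help, because the hard constraints produce degenerate ones. Take $A=\{x\}$ and $\eta\equiv-1$ on $\partial A\setminus\{y\}$. Then $\eta^{y,0}$ forces $\sigma(x)=-1$, the only common $G$-neighbour of $-1$ and $0$. Likewise $\eta^{y,1}$ forces $\sigma(x)=0$, the only common $G$-neighbour of $-1$ and $1$. Both configurations are admissible, so $\|\mu_A^{\eta^{y,0}}-\mu_A^{\eta^{y,1}}\|_x=1$ and $\gamma=1$. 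Since $\kappa=\max\{t,1\}/(1+t)\ge\frac12$, you get $3\kappa\gamma\ge\frac32$ for every $\theta$, so this criterion yields extremality nowhere.

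The exact fit with Theorem~\ref{thm3} that you point out is a symptom of this, not a confirmation. At $p_0=\frac12$ one has $\max\{A,1-A\}=\max\{t,1\}/(1+t)=\kappa=|\lambda_2|$. So ``$3\kappa\gamma<1$'' is literally the negation of the Kesten--Stigum inequality $3\lambda_2^2>1$, and nothing about $\gamma$ has been used. Other values of $p_0$ show that a pointwise evaluation is not a bound. At $p_0=\theta/(z+\theta)$ the same recipe gives $\gamma\le\frac12$, hence ``extremality'' whenever $\kappa<\frac23$. That contradicts Theorem~\ref{thm3} for $\frac1{\sqrt3}<\kappa<\frac23$. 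This is the phenomenon recorded in Remark~\ref{rk3}, and choosing $p_0=\frac12$ because it avoids the contradiction is not a proof. To establish the statement you would need a genuinely different tool, e.g.\ a direct non-reconstruction argument for the chain \eqref{f22} all the way up to the Kesten--Stigum bound. Whether that bound is sharp for this chain at $k=3$ is a nontrivial reconstruction question that the $\kappa\gamma$ method cannot decide.
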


For the case $k=2$, the following result was established in \cite{27}:

\begin{remark}\label{rk4} Let $k=2$. Then, for the HC-Blume-Capel model on a \textquotedblleft wand\textquotedblright\ graph, the measure $\mu_0$ is extremal for
\begin{equation}\nonumber
\frac{1}{2}\sqrt[3]{4\sqrt{2}-4}<\theta<\frac{1}{2}\sqrt[3]{28+20\sqrt{2}}.
\end{equation} \end{remark}

\begin{remark}\label{rk5} From Remark \ref{rk4}, it does not necessarily follow that extremality conditions hold for the measures $\mu_0$ when $k\geq4$.\end{remark}

For the measures $\mu_1$ and $\mu_2$, only the following result is known from \cite{27}:

\begin{remark}\label{rk6} Let $k=2$. Then, for the HC-Blume-Capel model on a \textquotedblleft wand\textquotedblright\ graph, the measures $\mu_1$ and $\mu_2$ are extremal for
\begin{equation}\nonumber
\sqrt[3]{\sqrt{2}-1+\frac{\sqrt{2\sqrt{2}-2}}{2}}<\theta<1,
\end{equation}
where $\sqrt[3]{\sqrt{2}-1+\frac{\sqrt{2\sqrt{2}-2}}{2}}\approx0.95$. \end{remark}

\begin{remark}\label{rk7} For the measures $\mu_1$ and $\mu_2$ with $k\geq3$, the question of their (non)extremality remains open.\end{remark}

\section{Conclusion}\label{conclusion}
In this paper, we investigated TISGMs for the HC-Blume-Capel model on a wand-type graph embedded in a Cayley tree of arbitrary order \( k \geq 2 \). It is known that there is the exact critical value \( \theta_{cr} \), which determines the number of TISGMs in the system. Specifically, for \( \theta \geq \theta_{cr} \), a unique TISGM exists, while for \( \theta < \theta_{cr} \), exactly three TISGMs emerge. Furthermore, we addressed the problem of extremality and non-extremality of these measures using the Kesten-Stigum criterion and other analytical methods.

Our results extend previous findings for special cases (such as \( k = 2 \)) to general values of \( k \), providing a more comprehensive understanding of the phase structure and critical behavior of the model. The identification of extremal and non-extremal Gibbs measures offers insights into the underlying stochastic dynamics and phase transitions in the system.

Future research directions include the study of periodic Gibbs measures beyond the translation-invariant case, as well as the exploration of the impact of external fields and additional interactions on the extremality properties of the measures. Moreover, numerical simulations and further analytical approaches could help refine the exact conditions under which non-trivial phase transitions occur.

\section{Data Availability Statement}

Not applicable

\section{Conflicts of Interest} The authors declare that they have no conflict of interest

\section{Funding}
No funding

\end{document}